\documentclass{lmcs}

\keywords{lambda-calculus, simple types, undecidability, higher-order matching, mechanization, Rocq}

\usepackage{csquotes}
\usepackage{bussproofs}
\usepackage{mathtools}
\usepackage{listings}
\usepackage{hyperref}

\newcommand{\bbN}{\mathbb{N}}
\newcommand{\calA}{\mathcal{A}}
\newcommand{\calF}{\mathcal{F}}
\newcommand{\calQ}{\mathcal{Q}}
\newcommand{\calR}{\mathcal{R}}

\newcommand{\frakR}{\mathfrak{R}}

\newcommand{\0}{\mathbf{0}}
\newcommand{\1}{\mathbf{1}}
\newcommand{\2}{\mathbf{2}}
\newcommand{\3}{\mathbf{3}}
\newcommand{\4}{\mathbf{4}}
\newcommand{\5}{\mathbf{5}}
\newcommand{\K}{\mathbf{K}}

\newcommand{\case}[2]{\ensuremath{\textnormal{\textbf{case }} #1 \textnormal{\textbf{ of }} \langle #2 \rangle}}
\newcommand{\caseelse}[3]{\ensuremath{\textnormal{\textbf{case }} {#1} \textnormal{\textbf{ of }} \langle #3 \mid #2 \rangle}}

\newcommand{\coq}[1]{\textnormal{[\href{https://github.com/uds-psl/coq-library-undecidability/blob/70dfc56f33a6e4835281044e68aa68279989047e/theories/LambdaCalculus/Reductions/SSTS01\_to\_HOMbeta.v\##1}{\raisebox{-0.2em}{\includegraphics[width=0.6em]{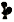}}}]}}
\newcommand{\coqextra}[1]{\textnormal{[\href{https://github.com/uds-psl/coq-library-undecidability/blob/70dfc56f33a6e4835281044e68aa68279989047e/theories/#1}{\raisebox{-0.2em}{\includegraphics[width=0.6em]{coq-logo-bw.pdf}}}]}}

\newcommand{\coqleft}[1]{\mbox{}\marginpar{\raggedleft\textnormal{\href{https://github.com/uds-psl/coq-library-undecidability/blob/70dfc56f33a6e4835281044e68aa68279989047e/theories/LambdaCalculus/Reductions/SSTS01\_to\_HOMbeta.v\##1}{\raisebox{-0.09em}{\includegraphics[width=0.53em]{coq-logo-bw.pdf}}}\hspace*{-0.8em}}}}
\newcommand{\coqextraleft}[1]{\mbox{}\marginpar{\raggedleft\textnormal{\href{https://github.com/uds-psl/coq-library-undecidability/blob/70dfc56f33a6e4835281044e68aa68279989047e/theories/#1}{\raisebox{-0.09em}{\includegraphics[width=0.53em]{coq-logo-bw.pdf}}}\hspace*{-0.8em}}}}

\newcommand{\ga}{\ensuremath{\iota}}

\theoremstyle{definition}
\newtheorem{problem}[thm]{Problem}

\begin{document}

\title[Undecidability of Higher-order beta-Matching]{Mechanized Undecidability of Higher-order beta-Matching (Extended Version)}

\author[A.~Dudenhefner]{Andrej Dudenhefner\lmcsorcid{0000-0003-1104-444X}}
\address{TU Dortmund University, Germany}
\email{andrej.dudenhefner@cs.tu-dortmund.de}

\begin{abstract}
Higher-order $\beta$-matching is the following decision problem: given two simply typed $\lambda$-terms, can the first term be instantiated to be $\beta$-equivalent to the second term?
This problem was formulated by Huet in the 1970s and shown undecidable by Loader in 2003 by reduction from $\lambda$-definability.

The present work provides a novel undecidability proof for higher-order $\beta$-matching, found in an effort to verify this result by means of a proof assistant.
Rather than starting from $\lambda$-definability, the presented proof encodes a restricted form of string rewriting as higher-order $\beta$-matching.
The particular approach is similar to Urzyczyn's approach showing undecidability of intersection type inhabitation.

The presented approach has several advantages.
First, the proof is simpler to verify in full detail due to the simple form of rewriting systems, which serve as a starting point.
Second, undecidability of the considered problem in string rewriting is already certified using the Rocq Prover.
As a consequence, we obtain a certified many-one reduction from the Halting Problem to higher-order $\beta$-matching.
Third, the presented approach identifies a uniform construction which shows undecidability of higher-order $\beta$-matching, $\lambda$-definability, and intersection type inhabitation.

The presented undecidability proof is mechanized in the Rocq Prover and contributed to the existing Coq Library of Undecidability Proofs.
\end{abstract}

\maketitle

\section{Introduction}

Higher-order $\beta$-unification in the simply typed $\lambda$-calculus is the following decision problem: given two simply typed $\lambda$-terms $M, N$, is there a substitution $S$ such that the instance $S(M)$ is $\beta$-equivalent to the instance $S(N)$?
Undecidability of higher-order $\beta$-unification was established by Huet~\cite{Huet73} in the 1970s, raising the question whether $\beta$-matching~\cite{Huet75} (the right-hand side term $N$ does not contain free variables) is decidable\footnote{Dowek~\cite{Dowek01} gives a comprehensive overview over unification and matching problems for the $\lambda$-calculus.}.
An equivalent presentation of higher-order $\beta$-matching (cf.\,Statman's range question~\cite{Statman82}) is: given a term~$F$ typed by the simple type $\sigma \to \tau$ and a term $N$ typed by the simple type $\tau$, is there a term~$M$ typed by the simple type $\sigma$ such that $F\,M$ is $\beta$-equivalent to $N$?

Decidability of higher-order $\beta$-matching was answered negatively\footnote{Not to be confused with the positive answer by Stirling~\cite{Stirling09} for higher-order $\beta\eta$-matching.} by Loader~\cite{Loader03} by reduction from $\lambda$-definability.
Loader introduces intricate machinery to formulate $\beta$-matching constraints which specify arbitrary finite functions.
Later, Joly~\cite{Joly05} refined Loader's result, shifting technical challenges to the underlying variant of $\lambda$-definability.
The intricate machinery renders verification of both approaches by means of a proof assistant quite challenging.

The present work presents a novel proof of the undecidability of higher-order $\beta$-matching, which is mechanized using the Rocq Prover~\cite{Coq_2023} (formerly known as the Coq proof assistant).

The mechanization leaves little room for ambiguities and potential errors, besides faithfulness of mechanized statements~\cite{pollack1998believe}.
This complements existing work on mechanized undecidability of higher-order $\beta$-unification~\cite{Spies020}.

The presented proof is not based on $\lambda$-definability;
rather, we consider a known rewriting problem in a restricted class of semi-Thue systems~\cite[Lemma~2]{Urzyczyn09} as a starting point.
The specific rewriting problem, referred to as $\0^+ \Rightarrow^* \1^+$, is: given a collection of rewrite rules of shape $ab \Rightarrow cd$, where $a,b,c,d$ are alphabet symbols, is there a non-empty sequence of $\0$s which can be transformed into a non-empty sequence of $\1$s?
As a consequence of the different starting point, the presented proof is simpler to verify in full detail and yields a concise mechanization.
The mechanization is incorporated into the existing Coq Library of Undecidability Proofs~\cite{CLUP20}, alongside the existing mechanization\footnote{The (Turing machine) Halting Problem is easily presented as Problem $\0^+ \Rightarrow^* \1^+$~\cite[Lemma~3.3]{DudenhefnerR19}.} of undecidability of the problem $\0^+ \Rightarrow^* \1^+$.

The main inspiration for the novel approach in the present work is Urzyczyn's undecidability proof for intersection type inhabitation~\cite{Urzyczyn09}.
Considering the relationship between the intersection type discipline and finite model theory~\cite{SalvatiMGB12}, the approach in the present work has an additional benefit:
it is uniformly applicable to prove undecidability of higher-order $\beta$-matching, intersection type inhabitation, and $\lambda$-definability.

The present work is an extended version of a conference article~\cite{Dudenhefner25}, and adds the following aspects.
\begin{itemize}
\item Further explanations throughout the textual passages.
\item Proof details apart from straightforward induction and case analysis.
\item Negative examples, showing what happens on the technical level when a given instance of the problem $\0^+ \Rightarrow^* \1^+$ has no solution.
\item Exemplified correspondence between intersection type inhabitation and higher-order $\beta$-matching.
\end{itemize}

\paragraph*{Paper organization} The present work is structured as follows: 
\begin{description}
\item[Section~\ref{sec:prelim}] Preliminaries for the simply typed $\lambda$-calculus, higher-order $\beta$-matching, and simple semi-Thue systems (including the undecidable Problem $\0^+ \Rightarrow^* \1^+$).
\item[Section~\ref{sec:undec}] Reduction from Problem $\0^+ \Rightarrow^* \1^+$ to higher-order $\beta$-matching.
\item[Section~\ref{sec:mech}] Overview over the mechanization in the Rocq Prover.
\item[Section~\ref{sec:uniform}] Applicability to intersection type inhabitation and $\lambda$-definability.
\item[Section~\ref{sec:concl}] Concluding remarks.
\end{description}

Statements in the digital version of the present work are linked to the corresponding mechanization, which is marked by the symbol\;\coqextra{LambdaCalculus/HOMatching.v}.

\newpage

\section{Preliminaries}\label{sec:prelim}
In this section we fix preliminaries and basic notation, following standard literature~\cite{Barendregt13book}.
\paragraph*{Higher-order $\beta$-Matching in the Simply Typed $\lambda$-Calculus}

The syntax of untyped $\lambda$-terms, called terms for brevity, is given in the following Definition~\ref{def:terms}.

\begin{defi}[$\lambda$-Terms]\label{def:terms}
\reversemarginpar\coqextraleft{L/L.v\#L8}%
\[
M, N ::= x \mid M \, N \mid \lambda x.M \quad \text{where } a, \ldots, z \text{ range over term variables}
\]
\end{defi}
Terms are considered modulo $\alpha$-equivalence (renaming of bound term variables).
Substitution of the term variable $x$ in the term $M$ by the term $N$ is denoted $M[x := N]$.
As usual, term application associates to the left, and we may group consecutive $\lambda$-abstractions.
We commonly refer to the term $\lambda x.x$ as $I$.

\begin{defi}[$\beta$-Reduction]
\reversemarginpar\coqextraleft{LambdaCalculus/Lambda.v\#L53}%
The relation $\to_\beta$ on terms is the contextual closure of $(\lambda x.M)\,N \to_\beta M[x := N]$.
\end{defi}
The $\beta$-equivalence relation $=_\beta$ is the reflexive, transitive, symmetric closure of $\to_\beta$.

In the simply typed $\lambda$-calculus we may assign to a term $M$ a simple type $\tau$ in type environment $\Gamma$, written as the judgment $\Gamma \vdash M : \tau$.
Similarly to prior work~\cite{Loader03}, one ground atom $\ga$ in the construction of simple types suffices for the negative result in the present work.
Definition~\ref{def:stlc} contains the rules (Var), ($\to$I), and ($\to$E) of the simple type system.

\begin{defi}[Simple Types with Ground Atom $\ga$]
\reversemarginpar\coqextraleft{LambdaCalculus/HOMatching.v\#L23}%
\[
\sigma, \tau ::= \ga \mid \sigma \to \tau
\]
\end{defi}
The arrow type constructor $\to$ associates to the right.

\begin{defi}[Type Environments]
\[
\Gamma ::= \{x_1 : \sigma_1, \ldots, x_n : \sigma_n\} \text{ where } x_i \neq x_j \text{ for } i \neq j
\]
\end{defi}

\begin{defi}[Simple Type System]\label{def:stlc}
\reversemarginpar\coqextraleft{LambdaCalculus/HOMatching.v\#L31}%
~\\[4pt]
\begin{tabular}{ccc}
{\RightLabel{\textnormal{(Var)}}
\AxiomC{$(x : \sigma) \in \Gamma$}
\UnaryInfC{$\Gamma \vdash x : \sigma$}
\DisplayProof}
&
{\RightLabel{\textnormal{($\to$I)}}
\AxiomC{$\Gamma, x : \sigma \vdash M : \tau$}
\UnaryInfC{$\Gamma \vdash \lambda x.M : \sigma \to \tau$}
\DisplayProof}
&
{\RightLabel{\textnormal{($\to$E)}}
\AxiomC{$\Gamma \vdash M : \sigma \to \tau$}
\AxiomC{$\Gamma \vdash N : \sigma$}
\BinaryInfC{$\Gamma \vdash M\,N : \tau$}
\DisplayProof}
\end{tabular}
\end{defi}
The following Example~\ref{xmp:derive} illustrates a type derivation in the simple type system.

\begin{exa}\label{xmp:derive}~\\
\begin{minipage}{\textwidth}
\centering
\smallskip
\AxiomC{$(f : \ga \to \ga) \in \{u : \ga, f : \ga \to \ga\}$}
\RightLabel{\textnormal{(Var)}}
\UnaryInfC{$ \{u : \ga, f : \ga \to \ga\} \vdash f : \ga \to \ga$}
\AxiomC{$(u : \ga) \in \{u : \ga, f : \ga \to \ga\}$}
\RightLabel{\textnormal{(Var)}}
\UnaryInfC{$ \{u : \ga, f : \ga \to \ga\} \vdash u : \ga$}
\RightLabel{\textnormal{($\to$E)}}
\BinaryInfC{$\{u : \ga, f : \ga \to \ga\} \vdash f\,u : \ga$}
\RightLabel{\textnormal{($\to$I)}}
\UnaryInfC{$\{u : \ga\} \vdash \lambda f.f\,u : (\ga \to \ga) \to \ga$}
\RightLabel{\textnormal{($\to$I)}}
\UnaryInfC{$\emptyset \vdash \lambda u.\lambda f.f\,u : \ga \to (\ga \to \ga) \to \ga$}
\DisplayProof
\end{minipage}
\end{exa}

Higher-order $\beta$-matching is the following typed unification problem, for which only one side is subject to instantiation. 

\begin{problem}[Higher-order $\beta$-Matching (${F}\,\mathsf{X} = N$)]\label{prb:hom}
\reversemarginpar\coqextraleft{LambdaCalculus/HOMatching.v\#L37}%
Given terms $F, N$ and simple types~$\sigma, \tau$ such that $\emptyset \vdash F : \sigma \to \tau$ and $\emptyset \vdash N : \tau$, is there a term $M$ such that $\emptyset \vdash M : \sigma$ and $F\,M =_\beta N$?
\end{problem}

Undecidability of higher-order $\beta$-matching is shown by Loader~\cite{Loader03} using a reduction from a variant of $\lambda$-definability.

\begin{thmC}[{\cite[Theorem~5.5]{Loader03}}]
Higher-order $\beta$-matching is undecidable.
\end{thmC}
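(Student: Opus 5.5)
We plan to prove the theorem by a many-one reduction from Problem $\0^+ \Rightarrow^* \1^+$ (rewrite rules $ab \Rightarrow cd$), which is undecidable by~\cite[Lemma~2]{Urzyczyn09}, rather than from $\lambda$-definability. Given a rule set $\calR$ over a finite alphabet, we construct closed terms $F : \sigma \to \tau$ and $N : \tau$, with types built over the single atom $\ga$. The goal is that $\calR$ admits a derivation $\0^n \Rightarrow^* \1^n$ for some $n \geq 1$ if and only if some closed $M : \sigma$ satisfies $F\,M =_\beta N$.

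\textbf{Encoding.} Following Urzyczyn's intersection type construction, we encode symbols as finite elements of a type such as $\ga \to \cdots \to \ga$, namely projections $\lambda x_1\ldots x_k.x_i$. Here $k$ is the alphabet size, together with a few extra positions used for bookkeeping.

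\textbf{Role of $F$.} The term $F$ passes to $M$ a small collection of combinators, and applies $M$ to several probing contexts. Each probe checks one constraint of a solution, viewed as a rectangular grid whose rows are the successive strings and whose columns are positions.
\begin{itemize}
\item The first row consists only of $\0$s.
\item The last row consists only of $\1$s.
\item Each pair of adjacent cells in one row either stays unchanged in the next row or is rewritten by a rule $ab \Rightarrow cd \in \calR$.
\item Rows are built consistently from left to right.
\end{itemize}
Each constraint is expressed by instantiating the bound variables of $M$ with \emph{case} combinators. Such a combinator inspects projections and returns a distinguished normal form when the local check succeeds, and a different normal form otherwise. The term $N$ is then a tuple of all the ``success'' normal forms.

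\textbf{Completeness.} Given a derivation, we build $M$ directly as a nested term that enumerates the grid. We then verify $F\,M =_\beta N$ by computation, inducting on the length of the derivation and on the width $n$.

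\textbf{Soundness, the main obstacle.} From an arbitrary typed $M$ with $F\,M =_\beta N$ we must extract a rewriting derivation. We would normalize $M$ to a $\beta$-normal long form, which is possible by strong normalization and subject reduction. We would then analyse its head structure by induction on the normal form.

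The difficulty is that $M$ can behave differently under different probes, so we cannot assume it is a uniform grid. Following the finite-model viewpoint, we would therefore not interpret $M$ syntactically. Instead, we interpret $M$ semantically as a function on the finite set of projections and prove an invariant: every subterm of $M$ of the ``row'' type, evaluated on all probes at once, yields a set of strings, and each of these strings is reachable from $\0^n$ via $\calR$.

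Establishing this invariant in the right generality, so that it survives arbitrary applications and abstractions inside $M$, will be the hard part. We would first try to state it as a logical relation indexed by simple types and prove it by the fundamental lemma.

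Finally, decidability of the matching problem would then decide $\0^+ \Rightarrow^* \1^+$, which is impossible.
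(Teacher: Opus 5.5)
Your overall architecture matches the paper: a reduction from $\0^+ \Rightarrow^* \1^+$, symbols as projections, several probes tupled into one instance, and completeness by explicit construction. There is, however, a genuine gap in soundness. It is not a matter of finding the right invariant; the route you describe cannot work.

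Your probes are closed combinators and $N$ is a tuple of closed normal forms. Nothing then constrains the \emph{shape} of $M$, so a solution may inspect its arguments directly instead of simulating rewriting (cf.\ Example~\ref{xmp:ad-hoc}). For the paper's own probes (Lemma~\ref{lem:ssts_to_constr}), consider the term
\[
\lambda r_1 \ldots r_L.\lambda z_\0 z_\1 z_\star p_1.\lambda s_\0 \ldots s_\bot.\caseelse{p_1\,\pi_\top}{\bullet \mapsto s_\$ \mid \1 \mapsto s_\0 \mid \0 \mapsto s_\1}{s_\bot}
\]
It satisfies the $H$-constraint and all three $G$-constraints for \emph{every} $\frakR$.

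More fundamentally, interpreting $M$ in a finite model and reasoning via a logical relation extracts from the hypothesis only $[\![F\,M]\!] = [\![N]\!]$. That equation already follows from $F\,M =_{\beta\eta} N$. If such an argument proved soundness, your $F, N$ would also reduce $\0^+ \Rightarrow^* \1^+$ to $\beta\eta$-matching. But $\beta\eta$-matching is decidable~\cite{Stirling09}, which would contradict Theorem~\ref{thm:ssts_undec}. So any correct soundness proof must exploit $\beta$-equivalence \emph{without} $\eta$ (Remark~\ref{rem:eta-break}).

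For the same reason, your step of passing to the $\eta$-long normal form of $M$ is harmful, because it does not preserve $F\,M =_\beta N$. For instance, $\lambda f.\lambda s.s$ satisfies $M\,I\,u =_\beta u$ from Example~\ref{xmp:no-ad-hoc}, but its $\eta$-long form yields $\lambda \vec{x}.u\,\vec{x}$ instead.

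The missing idea is the paper's syntactic shape restriction, carried out \emph{before} any semantic reasoning. $F_\frakR$ also applies $M$ to $I, \ldots, I, \lambda h.I, u, \lambda h.\lambda g.g\,I, I$ and demands that the result be $\beta$-equivalent to $u$. Here $u$ is bound in $F_\frakR$ and hence a non-constructible atom for $M$. This rules out extra $\lambda$-abstractions and ad hoc outputs: the term above yields $\pi_\bot$ under this probe.

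Together with the $H$-probe, induction on the $\beta$-normal form (Lemmas~\ref{lem:ring2} and~\ref{lem:ring1}, Theorem~\ref{thm:shape}) shows that $M = \lambda r_1 \ldots r_L.\lambda z_\0 z_\1 z_\star p_1.M'$ with $M' \in \calR_1$. This resolves exactly the non-uniformity you worried about: all $G$-probes now act on one fixed grid-shaped term. The derivation is then read off by structural induction over $\calR_m$ and $\calQ_m$ (Lemmas~\ref{lem:R_ssts_sem} and~\ref{lem:Q_ssts_sem}), with no logical relation needed. Your plan drops this syntactic stage, and it is precisely the part that cannot be dispensed with.
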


For the remainder of the present work we use the term \emph{matching} in order to refer to higher-order $\beta$-matching.
Since simply typed terms are strongly normalizing and $\beta$-reduction is confluent~\cite{Barendregt13book}, it suffices in Problem~\ref{prb:hom} to consider terms $F, M, N$ in normal form.

Let us get familiar with matching by means of several illustrating examples.
The following Example~\ref{xmp:hom} illustrates a positive matching instance.

\begin{exa}\label{xmp:hom}
Consider the terms $F := \lambda x.\lambda y.x\,y\,I$ and $N := I$, for which we have $\emptyset \vdash F : (\ga \to (\ga \to \ga) \to \ga) \to (\ga \to \ga)$ and $\emptyset \vdash N : \ga \to \ga$.

The matching instance ${F}\,\mathsf{X} = N$ is solvable, including the solution $M := \lambda u.\lambda f.f\,u$.
More precisely, we have the following two properties:
\begin{itemize}
\item $\emptyset \vdash M : \ga \to (\ga \to \ga) \to \ga$ (Example~\ref{xmp:derive})
\item $F\,M =_\beta \lambda y.(\lambda u.\lambda f.f\,u)\,y\,I =_\beta \lambda y.I\,y =_\beta N$
\end{itemize}
\end{exa}

We want to encode certain functional behavior as a matching instance.
The following Example~\ref{xmp:ad-hoc} shows a naive approach to such an encoding and its limitations. 

\begin{exa}\label{xmp:ad-hoc}
Let us associate elements of the set $\{1, 2, 3\}$ with projections $\pi_1 := \lambda xyz.x$, $\pi_2 := \lambda xyz.y$, and $\pi_3 := \lambda xyz.z$ respectively.
For the term $G := \lambda h.\lambda x y z.h\,y\,z\,x$ we have
$G\,\pi_1 =_\beta \pi_2$, $G\,\pi_2 =_\beta \pi_3$, and $G\,\pi_3 =_\beta \pi_1$.
Therefore, $G$ realizes a finite function $f_G : \{1, 2, 3\} \to \{1, 2, 3\}$ such that $f_G(1) = 2$, $f_G(2) = 3$, and $f_G(3) = 1$.

Let $\kappa := \ga \to \ga \to \ga \to \ga$ be a simple type for which we have $\emptyset \vdash \pi_i : \kappa$ for $i \in \{1,2,3\}$.
Consider the matching instance $F\,\mathsf{X} = \pi_3$, where $F := \lambda t.t\,G\,\pi_1$, and for which we have $\emptyset \vdash F : ((\kappa \to \kappa) \to \kappa \to \kappa) \to \kappa$.
The intended \enquote{meaning} of this matching instance is: starting with the element $1$, repeatedly apply the function $f_G$ in order to construct the element $3$.

One solution for this instance is the term $\lambda f.\lambda s.f\,(f\,s)$ for which we have:
\[F\,(\lambda f.\lambda s.f\,(f\,s)) =_\beta f\,(f\,s)[f := G, s := \pi_1] =_\beta G\,\pi_2 =_\beta \pi_3\]
This solution follows the intended meaning of the underlying representation, constructing the element $f_G(f_G(1)) = 3$.
Another solution is $\lambda f.\lambda s.f\,(f\,(f\,(f\,(f\,s))))$, which corresponds to $f_G(f_G(f_G(f_G(f_G(1))))) = 3$.

Unfortunately, there are solutions to the above matching instance which behave differently.
One such solution is $\lambda f.\lambda s.\pi_3$, for which we also have $F\,(\lambda f.\lambda s.\pi_3) =_\beta \pi_3$.
In this case, the element $3$ is constructed \enquote{ad hoc}, with no reference to the provided arguments.
Another solution is the term $\lambda f.\lambda s.\lambda x y z.f\,s\,z\,z\,z$.
This solution exploits the exact representation of elements via projections, disregarding any intended meaning of the underlying representation.
\end{exa}

In the above Example~\ref{xmp:ad-hoc} we would like to exclude certain unintended solutions, in order to faithfully encode intended functional behavior.
Towards this aim, a technique how to restrict the shape of solutions is illustrated in the following Example~\ref{xmp:no-ad-hoc}.
We formulate an additional constraint that excludes the unintended solutions while preserving the intended ones.
The key observation is that ad hoc solutions do not use the provided arguments or contain additional $\lambda$-abstractions (going under the hood of the encoding via projections).
We can use a \emph{free variable}, to both enforce argument use and hinder the construction of additional $\lambda$-abstractions.

\begin{exa}[{\cite[Proposition~3.4]{Dowek01}}]\label{xmp:no-ad-hoc}
Consider a term $M$ in normal form such that $M\,I\,u =_\beta u$ where $u$ is a term variable, and $\emptyset \vdash M : (\kappa \to \kappa) \to \kappa \to \kappa$.
By case analysis on $M$ we have that $M = \lambda f.\lambda s.N$ for some term $N$ in normal form.
Furthermore, $\{f : \kappa \to \kappa, s : \kappa\} \vdash N : \kappa$ and $N[f := \lambda x.x, s := u] =_\beta u$.
Therefore, the term $N$ is not an abstraction.
By induction on the size of $N$ and case analysis of the normal form we have that $N = s$ or $N = f\,(\ldots(f\,s)\ldots)$.
Since the term $M$ contains exactly two $\lambda$-abstractions, it cannot be an ad hoc solution from Example~\ref{xmp:ad-hoc}.
\end{exa}

\begin{rem}\label{rem:typing-break}
Without the typing restriction $\emptyset \vdash M : (\kappa \to \kappa) \to \kappa \to \kappa$ in the above Example~\ref{xmp:no-ad-hoc} the term $M := \lambda f.\lambda s.(f\,f)\,s$ satisfies $M\,I\,u =_\beta u$.
\end{rem}

Combining Example~\ref{xmp:ad-hoc} with Example~\ref{xmp:no-ad-hoc}, we formulate in the following Example~\ref{xmp:no-ad-hoc-combo} a matching instance which faithfully encodes the desired functional behavior.

\begin{exa}\label{xmp:no-ad-hoc-combo}
Let $F := \lambda t.\lambda r.r\,(t\,G\,\pi_1)\,(\lambda u.t\,I\,u)$ and $N := \lambda r.r\,\pi_3\,(\lambda u.u)$ where the term $G = \lambda h.\lambda x y z.h\,y\,z\,x$ is from Example~\ref{xmp:ad-hoc}. We have
\begin{itemize}
\item $\emptyset \vdash F : ((\kappa \to \kappa) \to \kappa \to \kappa) \to (\kappa \to (\kappa \to \kappa) \to \ga) \to \ga$
\item $\emptyset \vdash N : (\kappa \to (\kappa \to \kappa) \to \ga) \to \ga$
\end{itemize}
The matching instance $F\,\mathsf{X} = N$ combines the matching instance from Example~\ref{xmp:ad-hoc} with the additional restriction from Example~\ref{xmp:no-ad-hoc}. 
Meaningful solutions such as $\lambda f.\lambda s.f\,(f\,s)$ and $\lambda f.\lambda s.f\,(f\,(f\,(f\,(f\,s))))$ from Example~\ref{xmp:ad-hoc} still solve $F\,\mathsf{X} = N$.
However, solutions with too many $\lambda$-abstractions such as $\lambda f.\lambda s.\pi_3$ or $\lambda f.\lambda s.\lambda x y z.f\,s\,z\,z\,z$ from Example~\ref{xmp:ad-hoc} do not.
\end{exa}

The following Remark~\ref{rem:eta-break} illustrates how the addition of $\eta$-reduction (contextual closure of $\lambda x.f\,x \to_\eta f$ where $x$ does not appear free in $f$) undermines the technique from Example~\ref{xmp:no-ad-hoc-combo} (as well as Loader's approach).

\begin{rem}\label{rem:eta-break}
Example~\ref{xmp:no-ad-hoc} demonstrates how to limit abstractions in solutions.
However, in the presence of $\eta$-reduction this does not work.
Consider the term $G := \lambda h.\lambda x y z.h\,y\,z\,x$ from Example~\ref{xmp:ad-hoc}.
The term $M := \lambda g.\lambda h.\lambda xyz.h\,(g\,\pi_1 x\,z\,y)\,y\,z$ solves the matching instance in Example~\ref{xmp:no-ad-hoc-combo} because:\\
$\arraycolsep=1.4pt\begin{array}[t]{lclcl}
M\,G\,\pi_1 &=_\beta& \big(\lambda g.\lambda h.\lambda xyz.h\,(g\,\pi_1 x\,z\,y)\,y\,z\big)\,G\,\pi_1
&=_\beta& \lambda xyz.\pi_{1}\,(G\,\pi_1 x\,z\,y)\,y\,z\\
&=_\beta& \lambda xyz.G\,\pi_1 x\,z\,y
&=_\beta& \lambda xyz.\pi_2\,x\,z\,y =_\beta \pi_{3}
\end{array}$\\
$\arraycolsep=1.4pt\begin{array}[t]{lclcl}
M\,I\,u &=_\beta& \big(\lambda g.\lambda h.\lambda xyz.h\,(g\,\pi_1 x\,z\,y)\,y\,z\big)\,I\,u
&=_\beta& \lambda xyz.u\,(I\,\pi_1 x\,z\,y)\,y\,z\\
&=_\beta& \lambda xyz.u\,(\pi_1 x\,z\,y)\,y\,z &=_\beta& \lambda xyz.u\,x\,y\,z \to_\eta^* u
\end{array}$\\
In particular, $\eta$-reduction allows for additional $\lambda$-abstractions in the solution, making the technique from Example~\ref{xmp:no-ad-hoc} unsuitable.
\end{rem}

The observation from Example~\ref{xmp:no-ad-hoc-combo} is generalized by Loader to encode arbitrary families of finite functions.
This results in undecidability of higher-order $\beta$-matching by reduction from a variant of $\lambda$-definability.
Loader's generalization is quite sophisticated, as it requires construction principles to restrict shapes of realizers of arbitrary higher-order finite functions.
In the present work, we focus on a family of higher-order finite functions which can be identified by inspection of intersection types occurring in the undecidability proof of intersection type inhabitation~\cite{Urzyczyn09,DudenhefnerR19} and their relationship to finite model theory~\cite{SalvatiMGB12}.
This leads to a simpler undecidability proof and reveals a connection between matching, intersection type inhabitation, and $\lambda$-definability.
The presented approach has similarities with Joly's~\cite{Joly05} refinement of Loader's proof.
Joly shifts the technical burden to a restricted $\lambda$-definability problem.
Instead, we avoid $\lambda$-definability altogether and use a rewriting problem in a class of \emph{simple} semi-Thue systems as a starting point.

\paragraph*{Simple Semi-Thue Systems}

A semi-Thue system~\cite[Definition~2.1.1]{BookOtto93} is a rewriting system on words, given by a finite set of rules (Definition~\ref{def:sts}).

\begin{defi}[Semi-Thue System]\label{def:sts}
A \emph{semi-Thue system} $\frakR$ over a finite alphabet $\calA$ is a finite set of rules $u \Rightarrow v$, where $u, v \in \calA^*$.
The associated rewriting relation $\Rightarrow_\frakR$ on $\calA^*$ is given by
$xuy \Rightarrow_\frakR xvy$ for $(u \Rightarrow v) \in \frakR$ and $x, y \in \calA^*$.
\end{defi}

A simple semi-Thue system (Definition~\ref{def:ssts}) is a semi-Thue system of restricted shape, introduced by Urzyczyn~\cite{Urzyczyn09} in order to show undecidability of intersection type inhabitation.

\begin{defi}[Simple Semi-Thue System]\label{def:ssts}
\coqextraleft{StringRewriting/SSTS.v\#L23}%
A semi-Thue system $\frakR$ over an alphabet $\calA$ is \emph{simple}, if each rule has the shape $ab \Rightarrow cd$ for $a,b,c,d \in \calA$.
\end{defi}

The reflexive, transitive closure of the rewriting relation for a given simple semi-Thue system~$\frakR$ is denoted $\Rightarrow_\frakR^*$.
For arbitrary simple semi-Thue systems it is undecidable whether some non-empty sequence of $\0$s can be transformed into a non-empty sequence of $\1$s.

\begin{problem}[$\0^+ \Rightarrow^* \1^+$]\label{prb:ssts01}
\coqextraleft{StringRewriting/SSTS.v\#L37}%
Given a simple semi-Thue system $\frakR$, does $\0^n \Rightarrow^*_\frakR \1^n$ hold for some $n > 0$?
\end{problem}

\begin{thmC}[{\cite[Lemma~3.3]{DudenhefnerR19}}]\label{thm:ssts_undec}
\coqextraleft{StringRewriting/SSTS\_undec.v\#L21}%
Problem $\0^+ \Rightarrow^* \1^+$ is undecidable.
\end{thmC}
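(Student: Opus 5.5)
We plan to prove undecidability of Problem $\0^+ \Rightarrow^* \1^+$ by a many-one reduction from the Halting Problem for a suitable machine model, in the style of Urzyczyn~\cite[Lemma~2]{Urzyczyn09}. Undecidability is closed under many-one reductions, so it suffices to map each machine $\mathcal{M}$ computably to a simple semi-Thue system $\frakR_\mathcal{M}$ with the following property: $\mathcal{M}$ halts on empty input iff $\0^n \Rightarrow^*_{\frakR_\mathcal{M}} \1^n$ for some $n > 0$. We start from a single-tape Turing machine on a fixed finite tape alphabet. A computation that uses at most $n$ tape cells is then a length-preserving rewriting process on words of length $n$, and this fits the length-preserving shape $ab \Rightarrow cd$ of simple rules.

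The construction proceeds in three phases over an alphabet containing $\0$, $\1$, the tape symbols, and composite symbols $(q,s)$ marking the head position and state. In the \emph{initialization} phase, rules such as $\0\0 \Rightarrow \0\,\mathtt{b}$ and $\0\,\mathtt{b} \Rightarrow (q_0,\mathtt{b})\,\mathtt{b}$ turn $\0^n$ into an initial configuration: a blank tape of length $n$ with the head at the left end. Auxiliary boundary markers may be needed so that the head can only be created once and only at the left end. In the \emph{simulation} phase, each transition $\delta(q,s) = (q',s',R)$ becomes the rules $(q,s)\,t \Rightarrow s'\,(q',t)$ for all tape symbols $t$, and left moves are handled symmetrically. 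In the \emph{cleanup} phase, a halting head symbol $(q_f,s)$ is turned into $\1$, and $\1$ spreads through rules $\1\,t \Rightarrow \1\,\1$ and $t\,\1 \Rightarrow \1\,\1$.

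Completeness is the easy direction. If $\mathcal{M}$ halts using at most $n-1$ cells, we choose this $n$, build the initial configuration, simulate the run step by step, and then flood the word with $\1$s.

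Soundness is the expected main obstacle. Simple rules act locally, so a derivation may interleave initialization, simulation and cleanup in unintended ways. It may also create several heads, or spread $\1$ without a halting state ever occurring. We would first refine the alphabet so that every symbol carries a phase tag: $\1$ may only be produced next to a halting-state symbol or next to another $\1$, and a head symbol may only be produced from a left-boundary $\0$. We then prove an invariant on all words reachable from $\0^n$: every such word has the shape $\0^* \cdot C \cdot \0^*$, where $C$ is a prefix of a valid configuration sequence, and any $\1$ present certifies that a halting state occurred. The invariant is proved by induction on the derivation and case analysis on the applied rule. From it we get that $\1^n$ is reachable only if some reachable configuration contains a halting state, hence only if $\mathcal{M}$ halts. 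Getting this invariant right, with enough tags to rule out spurious interactions, is where most of the work lies. Since the paper notes the Halting Problem is readily presented this way~\cite[Lemma~3.3]{DudenhefnerR19}, we would follow that encoding for the concrete invariant.
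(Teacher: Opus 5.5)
Your reduction takes the same route as the paper's sketch: halting of a Turing machine from the empty tape, $n$ as the space bound, state--symbol pairs marking the head, one simple rule per transition and neighbouring symbol, then initialization and flooding with $\1$. Your completeness direction is fine.

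The gap is in soundness, exactly where the paper says care is needed. Simple rules see only two adjacent letters, and any derivation from $\0^k$ can be replayed verbatim at any offset inside a longer $\0^n$. So no tagging can confine head creation to the first letter of the word, and your invariant that every reachable word is $\0^* C \0^*$ with a single configuration $C$ is false. With your own rules, $\0^5 \Rightarrow \0\,\mathtt b\,\0\,\0\,\0 \Rightarrow \0\,\mathtt b\,\0\,\0\,\mathtt b$ already breaks it. Likewise $\0^5 \Rightarrow^* (q_0,\mathtt b)\,\mathtt b\,(q_0,\mathtt b)\,\mathtt b\,\mathtt b$, where both heads arose from a $\0$ at the left border of an initialized block. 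Heads then read each other's output. Take $\delta(q_0,\mathtt b) = (q_1,x,R)$, $\delta(q_1,\mathtt b) = (q_1,\mathtt b,R)$, $\delta(q_1,x) = (q_f,x,R)$. From the empty tape this machine never reads $x$ and never halts. Yet your simulation rules give $(q_0,\mathtt b)\,\mathtt b\,(q_0,\mathtt b)\,\mathtt b\,\mathtt b \Rightarrow^* x\,\mathtt b\,x\,(q_f,\mathtt b)\,(q_1,\mathtt b)$, the left head having stepped onto the $x$ written by the right one. Hence the step ``some reachable word contains a halting state, so $\mathcal M$ halts'' fails.

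The missing idea is that several initialization sites cannot be excluded by local rules, so they must be made harmless. One way is to let every cell a head leaves record the direction in which the head left it. A head may then step only onto a fresh cell or onto a cell whose record points towards it. A genuine single-head run is never blocked: the head's last departure from a visited cell to its left (right) was a right (left) move, so that cell points towards the head. By contrast, the first cell of another head's visited interval that you could step onto is either that head itself or points towards it, i.e.\ away from you. By induction on the derivation, the heads then own pairwise disjoint intervals, and each reads only blanks and its own output. So every halting state in a reachable word ends a genuine run of $\mathcal M$ from the empty tape, and $\1$ may flood over all letters. Delimiting each head's segment by markers created together with it, which no head may cross, works as well. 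That disjoint-intervals invariant, not a single-configuration invariant, is what your soundness proof has to establish.
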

\begin{proof}[Proof Sketch]
Reduction from the Turing machine halting problem from the empty tape.
Rule shape $ab \Rightarrow cd$ suffices to represent any Turing machine transition function.
For example, the transition from state $p$ to state $q$ reading symbol $x$, writing symbol $y$, and moving to the right to symbol $z$ can be represented as $(p, x)(\_, z) \Rightarrow (\_, y)(q, z)$.
The alphabet of the corresponding simple semi-Thue system includes symbols $\0$, $\1$, and pairs of state (including a special state \enquote{$\_$}) and tape symbol.
The Turing machine head is positioned at the pair for which the first projection is not the special state \enquote{$\_$}.
Some care needs to be taken with respect to correct word initialization and finalization after a halting state is reached.
The length $n > 0$ of starting word $\0^n$ is the space (length of the tape) required for the given Turing machine to reach a halting configuration.
\end{proof}

The following Example~\ref{xmp:ssts} shows a positive instance of Problem $\0^+ \Rightarrow^* \1^+$, which we will use throughout Section~\ref{sec:undec} to illustrate the reduction from Problem $\0^+ \Rightarrow^* \1^+$ to matching.

\begin{exa}\label{xmp:ssts}
Let $\frakR  := \{\0\0 \Rightarrow \2\2, \0\2 \Rightarrow \1\1, \2\0 \Rightarrow \1\1\}$ be a simple semi-Thue system over the alphabet $\{\0,\1,\2\}$.
We have $\0\0\0\0 \Rightarrow_\frakR \0\2\2\0 \Rightarrow_\frakR \1\1\2\0 \Rightarrow_\frakR \1\1\1\1$. As a side note, we have $\0^n \;{\not\Rightarrow}_\frakR^*\; \1^n$ for $n \in \{1, 2, 3\}$.
\end{exa}

Complementarily to Example~\ref{xmp:ssts}, the following Example~\ref{xmp:neg-ssts} shows a negative instance of Problem $\0^+ \Rightarrow^* \1^+$.

\begin{exa}\label{xmp:neg-ssts}
Let $\frakR  := \{\0\0 \Rightarrow \1\0, \0\1 \Rightarrow \1\1\}$ be a simple semi-Thue system over the alphabet $\{\0,\1\}$.
After any number of rewriting steps starting with $\0^{n+1}$ the last symbol is~$\0$, because it cannot be changed by any rule in $\frakR$.
Therefore, for any $n > 0$ we have $\0^n \not\Rightarrow_\frakR^* \1^n$.
\end{exa}

\begin{rem}
Simple semi-Thue systems were introduced by Urzyczyn to establish the undecidability result for intersection type inhabitation~\cite{Urzyczyn09}.
The exact presentation of Problem $\0^+ \Rightarrow^* \1^+$ was introduced later as a starting point in a refinement~\cite[Lemma~4.4]{DudenhefnerR19} of Urzyczyn's undecidability result for intersection type inhabitation~\cite{Urzyczyn09}.
Undecidability of Problem $\0^+ \Rightarrow^* \1^+$ is mechanized as part of Coq Library of Undecidability Proofs~\cite{CLUP20}, making it a good starting point for mechanized undecidability results.
\end{rem}

\section{Undecidability of Higher-order $\beta$-Matching}\label{sec:undec}
In this section we develop our main result (Theorem~\ref{thm:ssts_to_hom}): a reduction from the rewriting problem $\0^+ \Rightarrow^* \1^+$ to higher-order $\beta$-matching.

For the remainder of the section we fix the simple semi-Thue system $\frakR := \{R_1, \ldots, R_L\}$ with $L > 0$ rules over the finite alphabet $\{\0, \1, \2, \ldots, \mathbf{K}\}$.
Our approach is to construct simply typed terms which capture the two main aspects of the rewriting problem $\0^+ \Rightarrow^* \1^+$: the search for a sufficiently long starting sequence of $\0$s, and the individual rewriting steps to the desired sequence of $\1$s.

The remainder of the present section is structured as follows.
First, we fix basic notation, encoding, and properties of rewriting in the system $\frakR$.
Second, we restrict the shape of potential solutions for the constructed matching instance, similarly to Example~\ref{xmp:no-ad-hoc}.
Third, for solutions of restricted shape we capture the functional properties of the rewriting problem $\0^+ \Rightarrow^* \1^+$.

\subsection*{Notation}

We introduce four additional symbols in extended alphabet $\calA := \{\0, \1, \ldots, \mathbf{K}\} \cup \{\$, \bullet, \top, \bot\}$.
We represent a symbol $i \in \calA$ as the projection $\pi_i := \lambda s_\0 s_\1 \ldots s_\K s_\$ s_\bullet s_\top s_\bot.s_i$ typed by the simple type $\kappa := \underbracket{\ga \to \ldots \to \ga}_{|\calA|~\text{times}} \to \ga$.
The intended meaning of the additional symbols in the extended alphabet is as follows.
The symbol $\$$ marks the beginning of a word for word expansion,
$\bullet$ serves as symbol different from $\0$ and $\1$ for case analysis,
$\top$ serves as a constructible symbol, and $\bot$ serves as a non-constructible symbol.

For readability, we use the following $\textbf{case}$ notation to match individual symbols.

\begin{defi}[\textbf{case}]\label{def:case}
For $k \in \bbN$, distinct $i_1, \ldots, i_k \in \calA$, and terms $M, M_1, \ldots, M_k$:
\begin{align*}
\caseelse{x}{i_1 \mapsto M_1 \mid \ldots \mid i_k \mapsto M_k}{M} := &~x\,N_\0\,N_\1 \ldots N_\K\,N_\$\,N_\bullet\,N_\top\,N_\bot\\
&\text{ where } N_i = \begin{cases} M_j & \text{if } i = i_j\\M & \text{otherwise}\end{cases}
\end{align*}
\end{defi}

\begin{lem}
Let $k \in \bbN$, let $i_1, \ldots, i_k \in \calA$ be distinct, and let $M, M_1, \ldots, M_k$ be some terms.
For $j \in \calA$ we have 
$\caseelse{\pi_j}{i_1 \mapsto M_1 \mid \ldots \mid i_k \mapsto M_k}{M} =_\beta \begin{cases} M_l \text{ if } j = i_l \\ M \text{ else} \end{cases}$
\end{lem}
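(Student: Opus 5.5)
The plan is to unfold Definition~\ref{def:case} and then perform one $\beta$-reduction step for each abstraction of the projection $\pi_j$. Write $\calA = \{a_1, \ldots, a_m\}$ in the fixed order $\0, \1, \ldots, \K, \$, \bullet, \top, \bot$, with $m = |\calA|$. By definition,
\[
\caseelse{\pi_j}{i_1 \mapsto M_1 \mid \ldots \mid i_k \mapsto M_k}{M} = \pi_j\,N_{a_1}\,\ldots\,N_{a_m},
\]
where $N_i = M_l$ if $i = i_l$ and $N_i = M$ otherwise. Because the $i_1, \ldots, i_k$ are distinct, the index $l$ with $i = i_l$ is unique whenever it exists, so each $N_i$ is well defined. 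This is the only place where the distinctness hypothesis enters.

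Next, I would show that $\pi_j\,N_{a_1}\,\ldots\,N_{a_m} =_\beta N_j$. Since $\pi_j = \lambda s_{a_1} \ldots s_{a_m}.s_j$, I reduce the head redex $m$ times. After consuming the first $p$ arguments, the term is $(\lambda s_{a_{p+1}} \ldots s_{a_m}.s_j)[s_{a_1} := N_{a_1}, \ldots, s_{a_p} := N_{a_p}]$ applied to the remaining arguments. After all $m$ steps we obtain $s_j$ under the simultaneous substitution, which is $N_j$.

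The one point needing care is variable capture. The arguments $N_i$ may contain free occurrences of some $s_a$, so the substitutions have to be read up to $\alpha$-equivalence. I would avoid the issue by first $\alpha$-renaming the binders of $\pi_j$ to variables fresh for $M, M_1, \ldots, M_k$; since terms are considered modulo $\alpha$-equivalence, this is harmless. Each step then substitutes into a body whose only free variable among the binders is $s_j$. The remaining bound variables are untouched, and once $s_j$ is replaced by $N_j$, the later substitutions for other binders do not affect $N_j$, because those binders are fresh for $N_j$. A cleaner way to state this is a small auxiliary claim proved by induction on $m - p$ for the partially applied projection.

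Finally, I would conclude by cases on $j$. If $j = i_l$ for some $l$, then $N_j = M_l$; otherwise $N_j = M$. This is exactly the claimed right-hand side. I expect the only real obstacle to be the capture-avoidance bookkeeping in the iterated substitution, particularly in the mechanized setting with de Bruijn indices, where one needs that substitution into a closed argument vector behaves like a projection.
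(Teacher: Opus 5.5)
Your proposal is correct and is the direct argument the paper has in mind. The paper states this lemma without proof, since it follows by unfolding Definition~\ref{def:case} and performing the $|\calA|$ head $\beta$-reductions of $\pi_j$, with variable capture handled by $\alpha$-renaming exactly as you describe. One small caveat: your closing remark about a \emph{closed} argument vector is slightly off, because $M, M_1, \ldots, M_k$ may be open terms, but nothing in your actual argument depends on that remark.
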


A particular term $\delta_i$ for $i \in \calA$, which we use commonly is:
\[\delta_i := \lambda x.\lambda s_\0 s_\1 \ldots s_\K s_\$ s_\bullet s_\top s_\bot.\caseelse{x}{\top \mapsto s_i}{s_\bot}\]
We have $\emptyset \vdash \delta_i : \kappa \to \kappa$, and the following Lemma~\ref{lem:delta} specifies the behavior of $\delta_i$.

\begin{lem}\label{lem:delta}
For $i, j \in \calA$ such that $j \neq \top$ we have
$\delta_i\,\pi_\top =_\beta \pi_i$ and $\delta_i\,\pi_j =_\beta \pi_\bot$.
\end{lem}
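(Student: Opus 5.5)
The plan is a direct $\beta$-reduction computation: unfold $\delta_i$, perform the head redex, and reduce the resulting \textbf{case} expression with the preceding lemma on \textbf{case}. No induction is needed; the only facts used are the definitions of $\pi_j$ and $\delta_i$, and that $\beta$-reduction is closed under contexts, in particular under the abstractions $\lambda s_\0 \ldots s_\bot$.

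First, for an arbitrary $j \in \calA$, one $\beta$-step substitutes $\pi_j$ for $x$:
\[
\delta_i\,\pi_j \to_\beta \lambda s_\0 \ldots s_\K s_\$ s_\bullet s_\top s_\bot.\caseelse{\pi_j}{\top \mapsto s_i}{s_\bot}.
\]
Since $\pi_j$ is closed, no capture of the bound $s$-variables can occur. Under the binders, the preceding lemma applies with $k = 1$, $i_1 = \top$, $M_1 = s_i$ and $M = s_\bot$. It reduces the body to $s_i$ if $j = \top$ and to $s_\bot$ otherwise. Unfolding this: $\pi_j\,N_\0 \ldots N_\bot$ is a head redex that selects the argument at position $j$, which is $s_i$ exactly when $j = \top$ and $s_\bot$ otherwise.

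It then remains to recognize the results. The term $\lambda s_\0 \ldots s_\bot.s_i$ is by definition $\pi_i$, and $\lambda s_\0 \ldots s_\bot.s_\bot$ is $\pi_\bot$. This gives $\delta_i\,\pi_\top =_\beta \pi_i$, and $\delta_i\,\pi_j =_\beta \pi_\bot$ for $j \neq \top$.

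The only point needing care, which is bookkeeping rather than a real obstacle, is that the bound variables $s_\0, \ldots, s_\bot$ in $\delta_i$ and in the \textbf{case} arguments are the same binders as those in the definition of the projections. The final abstraction therefore matches $\pi_i$ or $\pi_\bot$ syntactically up to $\alpha$-equivalence, and the index $i$ is looked up among the same ordered list of variables.
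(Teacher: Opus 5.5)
Your direct $\beta$-computation via the preceding \textbf{case} lemma is correct, and it is exactly the intended argument: the paper states this lemma without proof, treating it as a routine consequence of the definitions of $\delta_i$ and $\pi_j$. The one step you leave implicit is harmless under the paper's convention of working modulo $\alpha$-equivalence: before reducing $\pi_j\,N_\0\ldots N_\bot$, the inner binders of $\pi_j$ must be renamed apart from the outer $s_\0,\ldots,s_\bot$, which occur free in the arguments $N_l$.
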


Intuitively, $\delta_i$ is $\pi_i$ when provided with the easily constructed $\pi_\top$.
However, it is essential in the formulation of constraints restricting solution shape that $\delta_i$ is assigned the same simple type $\kappa \to \kappa$ as the term $I$.
In particular, we will use the term $I$ (similarly to Example~\ref{xmp:no-ad-hoc}) of type $\kappa \to \kappa$  in combination with a free variable $u$ of type $\kappa$ in order to restrict solution shape.

\subsection*{Syntactic Constraints}
We identify the shape of \enquote{well-formed} terms, suitable to represent rewriting.
In the following Definition~\ref{def:ring} terms in the set $\calQ_m$ capture consecutive rule application for a word of length $m + 1$, ending in the word $\1^{m+1}$ (represented by $z_\1 \in \calQ_m$).
The subterm $r_i\,p_j$ (and $r_i\,(\lambda w.p_j\,w)$) for $i \in \{1, \ldots, L\}$ and $j \in \{1, \ldots, m\}$ indicates an application of the rule $R_i$ at position $j$.
Additionally, terms in the set $\calR_m$ capture consecutive increase of word length starting with $m+1$, and initialization with $\0$s before rewriting (represented by $z_\0\,N\,M \in \calR_m$ for $M \in \calQ_m$).
Specifically, the subterm $z_\star\,N\,(\lambda p_{m+1}.M)$ introduces an additional bound variable $p_{m+1}$ in order to argue about rule application at position $m+1$ in the longer word.
Consequently, terms in $\calR_1$ represent witnesses for an arbitrary expansion of a word starting with length $2$, followed by initialization with $\0$s, and consecutive rule application (potentially ending in $\1$s).

\begin{defi}[Sets $\calQ_m$, $\calR_m$ of Terms]\label{def:ring}
For $m > 0$, let $\calQ_m$ and $\calR_m$ be the smallest sets of terms satisfying the following rules:
\begin{itemize}
\item $z_\1 \in \calQ_m$
\item if $M \in \calQ_m$ then $(r_i\,p_j\,M) \in \calQ_m$ for $i \in \{1, \ldots, L\}$ and $j \in \{1, \ldots, m\}$
\item if $M \in \calQ_m$ then $(r_i\,(\lambda w.p_j\,w)\,M) \in \calQ_m$ for $i \in \{1, \ldots, L\}$ and $j \in \{1, \ldots, m\}$
\item if $M \in \calQ_m$ then $(z_\0\,N\,M) \in \calR_m$ for any term $N$
\item if $M \in \calR_{m+1}$ then $(z_\star\,N\,(\lambda p_{m+1}.M)) \in \calR_m$ for any term $N$
\end{itemize}
\end{defi}

\begin{rem}\label{rem:R1}
Terms in $\calR_1$ are inspired by inhabitants in a refinement~\cite[Lemma~4.4]{DudenhefnerR19} of Urzyczyn's undecidability result for intersection type inhabitation~\cite{Urzyczyn09}.
\end{rem}

Free variables occurring in terms in $\calQ_m$ and $\calR_m$ are assigned simple types according to the following type environment $\Gamma_m$, such that terms in $\calQ_m$ and $\calR_m$ can be assigned the simple type $\kappa$.

\begin{defi}[Type Environment $\Gamma_m$]\label{def:Gamma}
\coqleft{L459}%
For $m > 0$ let
\begin{align*}
\Gamma_m &:= \{ \begin{aligned}[t]
&z_\1 : \kappa, z_\0 : (\kappa \to \kappa) \to \kappa \to \kappa, z_\star : (\kappa \to \kappa) \to ((\kappa \to \kappa) \to \kappa) \to \kappa,\\
&p_1 : \kappa \to \kappa, \ldots, p_m : \kappa \to \kappa,\\
&r_1 : (\kappa \to \kappa) \to \kappa \to \kappa, \ldots, r_L : (\kappa \to \kappa) \to \kappa \to \kappa\}
\end{aligned}
\end{align*}
\end{defi}

Similarly to Example~\ref{xmp:no-ad-hoc}, we formulate typed terms (Definition~\ref{def:H}) and $\beta$-equivalence constraints characterizing members of $\calQ_m$ (Lemma~\ref{lem:ring2}) and $\calR_m$ (Lemma~\ref{lem:ring1}).

\begin{defi}[Typed Terms $H_\star$, $H_\0$, $H_R$]\label{def:H}
\begin{align*}
H_\star := &\lambda h.\lambda g.\lambda s_\0 s_\1 \ldots s_\K s_\$ s_\bullet s_\top s_\bot. \caseelse{g\,\delta_\bullet}{\$ \mapsto s_\$}{s_\bot}\\
H_\0 := &\lambda h.\lambda x. \lambda s_\0 s_\1 \ldots s_\K s_\$ s_\bullet s_\top s_\bot.\caseelse{x}{\1 \mapsto s_\$}{s_\bot}\\
H_R := &\lambda h.\lambda x. \lambda s_\0 s_\1 \ldots s_\K s_\$ s_\bullet s_\top s_\bot.\caseelse{h\,\pi_\top}{\bullet \mapsto \caseelse{x}{\1 \mapsto s_\1}{s_\bot} }{s_\bot}
\end{align*}
\[
\emptyset \vdash H_\star : \Gamma_m(z_\star) \qquad\quad
\emptyset \vdash H_\0 : \Gamma_m(z_\0) \qquad\quad
\emptyset \vdash H_R : \Gamma_m(r_i) \text{ for } i \in \{1, \ldots, L\}
\]
\end{defi}

We introduce substitutions $S_F$ and $S_H$ acting on the term variables $z_\star, z_\1, z_\0, r_1, \ldots, r_L$, which occur in terms in $\calQ_m$ and $\calR_m$.
In particular, the term $I$ along with the free variable~$u$ is used in $S_F$ analogously to Example~\ref{xmp:no-ad-hoc} in order to hinder the use of unwanted $\lambda$-abstractions. 

\begin{defi}[Substitutions $S_F$, $S_H$]\label{def:SF_SH}
~\\
$\begin{array}{llll}
S_F(z_\star) := \lambda h.\lambda g.g\,I ~&
S_F(z_\1) := u ~&
S_F(z_\0) := \lambda h.I ~&
S_F(r_j) := I \text{ for } j \in \{1, \ldots, L\}\\
S_H(z_\star) := H_\star &
S_H(z_\1) := \pi_\1 &
S_H(z_\0) := H_\0 &
S_H(r_j) := H_R \text{ for } j \in \{1, \ldots, L\}
\end{array}$
\end{defi}

\begin{lem}\label{lem:ring2}
\coqleft{L1687}%
For $m > 0$, if a term $M$ is in normal form such that $\Gamma_m \vdash M : \kappa$,\\
$S_F(M)[p_1 {\,:=\,} I, \ldots, p_m {\,:=\,} I] =_\beta u$, and
$S_H(M)[p_1 := \delta_\bullet, \ldots, p_m := \delta_\bullet] =_\beta \pi_{\1}$,
then $M \in \calQ_m$.
\end{lem}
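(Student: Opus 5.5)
We argue by induction on the size of the normal form $M$, followed by a case analysis of its head.
Since $\Gamma_m \vdash M : \kappa$ and $\kappa = \ga \to \ldots \to \ga$, a normal $M$ is either an abstraction $\lambda s.M'$ or a neutral term $y\,N_1 \ldots N_k$ with $y$ a variable of $\Gamma_m$.

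\emph{Abstraction case.}
This case is excluded by the first constraint.
Under $S_F$ and $[p_j := I]$ the abstraction remains an abstraction.
Moreover, $S_F(\lambda s.M')[\ldots]$ is $\lambda s.(\ldots)$ with the free variable $u$ required as its $\beta$-normal form.
An abstraction cannot be $\beta$-equivalent to the variable $u$ (confluence plus strong normalization), so $M$ is not an abstraction.

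\emph{Neutral case.}
We split on the head variable $y$, using the types in $\Gamma_m$.

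\begin{itemize}
\item If $y = p_j$, then $M = p_j\,N$ with $N : \kappa$ (possibly applied further to ground arguments, but then the type would not be $\kappa$).
Under $S_H$ with $p_j := \delta_\bullet$, the term becomes $\delta_\bullet\,(\ldots)$, which by Lemma~\ref{lem:delta} equals either $\pi_\bullet$ or $\pi_\bot$, never $\pi_\1$.
For this to be rigorous we need that $S_H(N)[\ldots]$ normalizes to some projection $\pi_j$.
That holds because it is a closed term of type $\kappa$, hence its normal form is $\lambda \vec s.s_i$.
So this case is impossible.
\item If $y = z_\1$, then $k = 0$ and $M = z_\1 \in \calQ_m$.
\item If $y = z_\star$ or $y = z_\0$, then $S_F$ turns the term into $N_2\,I$ or $I\,N_2$ respectively.
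For $z_\0$, the $S_H$ image is $H_\0\,N_1\,N_2$, which only yields $\pi_\$$ or $\pi_\bot$, never $\pi_\1$.
For $z_\star$, the $S_H$ image likewise yields only $\pi_\$$ or $\pi_\bot$.
Hence both cases are impossible.
This is where $\calQ_m$ differs from $\calR_m$.
\item If $y = r_i$, then $M = r_i\,N_1\,N_2$ with $N_1 : \kappa \to \kappa$ and $N_2 : \kappa$ normal.
Under $S_F$ we get $N_1'\,N_2'$ with $N_1' = S_F(N_1)[p := I]$, which must equal $u$.
Under $S_H$ we get $H_R\,N_1''\,N_2''$.
This equals $\pi_\1$ only if $N_1''\,\pi_\top =_\beta \pi_\bullet$ and $N_2'' =_\beta \pi_\1$.
\end{itemize}

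The remaining work in the $r_i$ case is to analyze the normal form $N_1$ of type $\kappa \to \kappa$.
The goal is to show that $N_1$ is $p_j$ or $\lambda w.p_j\,w$.
Once that is shown, $N_2$ satisfies both hypotheses and is smaller, so the induction hypothesis gives $N_2 \in \calQ_m$, and hence $M \in \calQ_m$.

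\emph{Main obstacle: the shape of $N_1$.}
I would set $N_1'' \,\pi_\top =_\beta \pi_\bullet$ against the requirement that $N_1'\,N_2' =_\beta u$.
If $N_1$ is neutral, its head must be $p_j$: heads $r_i, z_\0, z_\star, z_\1$ have the wrong type or produce wrong outputs, by reasoning analogous to the above.
The case $N_1 = p_j$ is then fine.
If $N_1 = \lambda w.B$, then $B : \kappa$ is normal.
It cannot be an abstraction: otherwise $N_1'\,N_2'$ would be an abstraction, not $u$.
It cannot be $w$ either, since $\pi_\top \neq \pi_\bullet$.
If $B$ is headed by $p_j$ applied to $B'$, then $\delta_\bullet\,(B'[w:=\pi_\top]) = \pi_\bullet$ forces $B'[w := \pi_\top] = \pi_\top$.
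Meanwhile $B'$ under $S_F$ must reduce to the argument passed in, ultimately to $u$.
Since the only closed normal inhabitants of $\kappa$ are projections, $u$ can only arise from the variable $w$, and $u$ is not any $\pi_i$.
I would therefore strengthen the induction into an auxiliary lemma: \emph{if $\Gamma_m, w:\kappa \vdash B : \kappa$ is normal and $S_F(B)[p:=I,\, w:=u] =_\beta u$, then $B$ is $w$ or has the form $p_j\,B'$ with the same property for $B'$, or is headed by some $r_i$ or $z$.}
The non-$p_j$ heads are then excluded via the $S_H$ constraint with $w := \pi_\top$, which forces the output $\pi_\top$ rather than $\pi_\bullet$.
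A nested $p_j\,(p_{j'}\ldots)$ gives $\delta_\bullet(\pi_\bullet) = \pi_\bot$, so exactly one $p_j$ applied to $w$ remains.
This auxiliary analysis, tracking both substitutions simultaneously, is the step I expect to be most delicate.
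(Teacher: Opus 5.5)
Your strategy matches the paper's proof: induction on the size of the normal form, case analysis on the head, $S_F$ used to exclude abstractions, and $S_H$ used to exclude every head that cannot produce $\pi_\1$. The top-level cases are correct ($\lambda$, $p_j$, $z_\1$, $z_\0$, $z_\star$), and so is the reduction of the $r_i$ case to determining the shape of $N_1$. The step that fails as written is the auxiliary lemma for $N_1 = \lambda w.B$. Its hypothesis is $S_F(B)[p:=I,\,w:=u] =_\beta u$, but what you actually have is $S_F(B)[p:=I,\,w:=N_2'] =_\beta u$ with $N_2' = S_F(N_2)[p_1:=I,\ldots,p_m:=I]$. The fact $N_2' =_\beta u$ is exactly what you can only conclude \emph{after} the shape of $N_1$ is known, so invoking the lemma is circular. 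Relatedly, the claim that ``$u$ can only arise from the variable $w$'' is false, because $S_F(z_\1) = u$. For example, $N_1 = \lambda w.p_j\,z_\1$ satisfies the $S_F$ constraint for every $N_2$; it is rejected by $S_H$, not by $S_F$.

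The repair is to use $S_F$ only for what it can establish independently of $N_2'$. An abstraction remains an abstraction under any substitution for $w$, so $S_F$ rules out $B$ being an abstraction, exactly as you already argued one level up. When $B = p_j\,C$, it likewise rules out $C$ being an abstraction. Everything else is eliminated by $S_H$ with $w := \pi_\top$. For $B$, the target $\pi_\bullet$ excludes the heads $w$, $z_\1$, $z_\0$, $z_\star$ and $r_i$. For $C$, the target $\pi_\top$ excludes $z_\1$, $z_\0$, $z_\star$, $r_i$ and any further $p_{j'}$, which leaves $C = w$. Only at this point does $(\lambda w.I\,w)\,N_2' =_\beta u$ yield $N_2' =_\beta u$. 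Then the induction hypothesis applies to $N_2$, and $M = r_i\,(\lambda w.p_j\,w)\,N_2 \in \calQ_m$.
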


\begin{proof}
Induction on the size of $M$ and case analysis of the normal form, observing that $S_F(z_\1) = u$ is the only occurrence of $u$ in $S_F$, and $S_H(z_\1)$ and $S_H(r_j)$ for $j \in \{1, \ldots, L\}$ are the only ways to construct $\pi_\1$.
\end{proof}

\begin{lem}\label{lem:ring1}
\reversemarginpar\coqleft{L1825}%
For $m > 0$, if a term $M$ is in normal form such that $\Gamma_m \vdash M : \kappa$,\\
$S_F(M)[p_1 {\,:=\,} I, \ldots, p_m {\,:=\,} I] =_\beta u$, and
$S_H(M)[p_1 := \delta_\bullet, \ldots, p_m := \delta_\bullet] =_\beta \pi_{\$}$,
then $M \in \calR_m$.
\end{lem}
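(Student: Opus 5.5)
We prove the statement for all $m > 0$ simultaneously, by induction on the size of $M$, using Lemma~\ref{lem:ring2} for the base case. Since $M$ is in normal form and $\Gamma_m \vdash M : \kappa$, either $M$ is an abstraction or $M = y\,N_1 \ldots N_k$ for a variable $y$ in $\Gamma_m$.

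First we exclude abstractions. If $M = \lambda x.M'$, then $S_F(M)[p_j := I]$ is again an abstraction, because $S_F$ only substitutes closed terms and the free variable $u$ for free variables. An abstraction cannot be $\beta$-equivalent to the variable $u$. So $M$ has head variable $y \in \{z_\1, z_\0, z_\star, p_1,\ldots,p_m, r_1,\ldots,r_L\}$, with the number of arguments fixed by the type in $\Gamma_m$. We then consider each head in turn.

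\begin{itemize}
\item $y = z_\1$: then $M = z_\1$, and $S_H(M) = \pi_\1 \neq_\beta \pi_\$$. This is impossible.
\item $y = p_j$: then $M = p_j\,N$, and the second constraint gives $\delta_\bullet\,S_H(N)[\ldots] =_\beta \pi_\$$. By Lemma~\ref{lem:delta}, $\delta_\bullet$ only ever yields $\pi_\bullet$ or $\pi_\bot$, so this is impossible. (Terms of type $\kappa$ with free variables among the $s$'s do not arise here, since $M$ is closed apart from $\Gamma_m$.)
\item $y = r_i$: then $M = r_i\,N\,N'$. Its $S_H$-image reduces to a case analysis whose branches return only $s_\1$ or $s_\bot$, so it can never yield $\pi_\$$. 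This is impossible.
\item $y = z_\0$: then $M = z_\0\,N\,M'$. $S_F$ gives $(\lambda h.I)\,N\,M' \to S_F(M')[\ldots] =_\beta u$. $S_H$ gives $\caseelse{S_H(M')[\ldots]}{\1 \mapsto s_\$}{s_\bot} =_\beta \pi_\$$. Substituting the closed projections for the $s$'s, this forces $S_H(M')[\ldots] =_\beta \pi_\1$. I would argue this by contraposition: write the normal form of $S_H(M')[\ldots]$, which is closed of type $\kappa$, hence of shape $\lambda \vec s.\,s_i$, and then apply the case lemma. Since $M'$ is a normal subterm with $\Gamma_m \vdash M' : \kappa$, Lemma~\ref{lem:ring2} yields $M' \in \calQ_m$, hence $M \in \calR_m$.
\item $y = z_\star$: then $M = z_\star\,N\,G$ with $\Gamma_m \vdash G : (\kappa\to\kappa)\to\kappa$. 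Being normal, $G$ is either an abstraction $\lambda p.M'$ or a neutral term. The $S_F$ constraint reads $S_F(G)[\ldots]\,I =_\beta u$. If $G$ is neutral, its head is one of the context variables, whose $S_F$-images are closed or are $u$; the type $(\kappa\to\kappa)\to\kappa$ then rules out every head except a partial application. I would check these cases as for $M$ itself; they fail because $z_\star, z_\0, r_i$ applied partially give types different from $(\kappa\to\kappa)\to\kappa$, except that $z_\star\,N'$ has that type. Here $S_F$ gives $\lambda g.g\,I$, and applied to $I$ this yields $I\,I = I \neq u$. So $G = \lambda p_{m+1}.M'$, after $\alpha$-renaming, with $\Gamma_{m+1} \vdash M' : \kappa$ and $M'$ normal and smaller than $M$. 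The constraints become $S_F(M')[p_1..p_{m+1} := I] =_\beta u$ and, from $\caseelse{S_H(G)\,\delta_\bullet}{\$\mapsto s_\$}{s_\bot} =_\beta \pi_\$$, $S_H(M')[p_1..p_{m+1} := \delta_\bullet] =_\beta \pi_\$$. The induction hypothesis at $m+1$ gives $M' \in \calR_{m+1}$, so $M \in \calR_m$.
\end{itemize}

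The main obstacle is making the step ``case analysis on a $\beta$-normal closed projection forces the scrutinee to be a specific $\pi_i$'' rigorous. The difficulty is that $S_H(M')$ is only known up to $\beta$. I would handle it with a small auxiliary lemma: every closed term of type $\kappa$ is $\beta$-equivalent to some $\pi_i$, by strong normalization and the fact that the only normal forms of type $\kappa$ with one ground atom are the projections. Combined with the case lemma, this gives the required equalities in the $z_\0$, $r_i$ and $z_\star$ cases. The neutral-$G$ subcase in the $z_\star$ case is the other fiddly point; I would dispatch it through the same lemma together with the observation that $u$ appears only via $z_\1$.
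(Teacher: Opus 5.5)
Your proof is correct and follows the paper's argument. The shared ingredients are induction on the size of $M$ generalized over $m$, case analysis on the head variable, Lemma~\ref{lem:ring2} in the $z_\0$ case, the induction hypothesis at $m+1$ in the $z_\star$ case, and the observation that only $H_\0$ and $H_\star$ can produce $\pi_\$$.

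One harmless slip: $z_\star\,N'$ has type $((\kappa\to\kappa)\to\kappa)\to\kappa$, not $(\kappa\to\kappa)\to\kappa$. So no neutral term over $\Gamma_m$ has the type of $G$, and $G$ is an abstraction for typing reasons alone; your $S_F$-based refutation of that case is correct but unnecessary.
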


\begin{proof}
Induction on the size of $M$, case analysis of the normal form, and Lemma~\ref{lem:ring2}, observing that $S_H(z_\0)$ and $S_H(z_\star)$ are the only ways to construct $\pi_\$$.
\end{proof}

As a consequence of the above Lemma~\ref{lem:ring1}, the following Theorem~\ref{thm:shape} presents $\beta$-equivalence constraints which suffice to restrict the shape of terms under consideration.

\begin{thm}\label{thm:shape}
\reversemarginpar\coqleft{L1932}%
If a term $M$ is in normal form such that\\$\emptyset \vdash M : \Gamma_1(r_1) \to \cdots \to \Gamma_1(r_L) \to \Gamma_1(z_\0) \to \Gamma_1(z_\1) \to \Gamma_1(z_\star) \to \Gamma_1(p_1) \to \kappa$,\\
$M\,I \ldots I\,(\lambda h.I)\,u\,(\lambda h.\lambda g.g\,I)\,I =_\beta u$, and
$M\,H_R \ldots H_R\,H_\0\,\pi_\1\,H_\star\,\delta_\bullet =_\beta \pi_{\$}$,\\
then $M = \lambda r_1 \ldots r_L.\lambda z_\0 z_\1 z_\star p_1.N$ for some term $N$ such that $N \in \calR_1$.
\end{thm}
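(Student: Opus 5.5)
The plan is to reduce Theorem~\ref{thm:shape} to Lemma~\ref{lem:ring1} with $m = 1$. First we peel off the leading abstractions of $M$. Since $M$ is a closed normal form, its type is an arrow type with $L+4$ arguments ending in $\kappa$, and $\kappa$ itself is an arrow type. So by case analysis of normal forms $M$ has the shape $\lambda r_1 \ldots r_L.\lambda z_\0 z_\1 z_\star p_1.N'$ followed possibly by further abstractions. More precisely, we eta-long decompose as far as actual $\lambda$s exist.

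If $M$ has fewer than $L+4$ leading abstractions, then its body is a normal form whose head is a bound variable (a closed term has no free heads). It is therefore a variable applied to arguments, and we handle it uniformly below. Indeed, the standard fact used in Lemma~\ref{lem:ring2} that normal forms of a given type in context decompose as $\lambda\vec x.\,y\,\vec N$ lets us write $M = \lambda r_1 \ldots r_L z_\0 z_\1 z_\star p_1.N$ with $\Gamma_1 \vdash N : \kappa$ and $N$ in normal form. The only case needing care is the one with too few abstractions. There I would argue by typing that the body must be $\lambda$-free at the top and of an arrow type. The simplest route is to first $\eta$-expand $M$ at the top level to obtain exactly $L+4$ binders; the expanded term is still $\beta$-normal, because the body is headed by a variable. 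Alternatively, I would show directly that such $M$ cannot satisfy the $u$-equation. After substituting, the result would be headed by one of $I$, $\lambda h.I$, $\lambda h.\lambda g.g\,I$, $u$, so it either reduces to a term whose shape is incompatible or equals $u$ only if it is already of the required form. I expect this bookkeeping to be the main (though routine) obstacle, and in the mechanization it is handled by the normal-form inversion lemma.

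Next, by applying $M$ to the argument sequences and $\beta$-reducing the top-level redexes, the two hypotheses translate into equations on $N$. These are $S_F(N)[p_1 := I] =_\beta u$, since $S_F$ maps $r_j \mapsto I$, $z_\0 \mapsto \lambda h.I$, $z_\1 \mapsto u$, $z_\star \mapsto \lambda h.\lambda g.g\,I$ exactly matching the arguments, and $S_H(N)[p_1 := \delta_\bullet] =_\beta \pi_\$$. Here we use that simultaneous substitution of closed terms (and $u$, which is not bound in $N$ after renaming) coincides with the iterated $\beta$-reduction.

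Finally, since $N$ is in normal form, $\Gamma_1 \vdash N : \kappa$, and both equations hold, Lemma~\ref{lem:ring1} with $m = 1$ yields $N \in \calR_1$. This concludes the proof.
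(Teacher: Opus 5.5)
Your overall plan is the paper's. You strip the binders $\lambda r_1\ldots r_L z_\0 z_\1 z_\star p_1$, turn the two hypotheses into $S_F(N)[p_1 := I] =_\beta u$ and $S_H(N)[p_1 := \delta_\bullet] =_\beta \pi_\$$ with $\Gamma_1 \vdash N : \kappa$, and apply Lemma~\ref{lem:ring1} with $m = 1$. That lemma also rules out more than $L+4$ binders, since members of $\calR_1$ are applications.

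The gap is the case you flag yourself: $M$ with only $k < L+4$ leading abstractions. Neither of your routes closes it as written.

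\emph{The $\eta$-expansion route.} It gives the shape for the expanded term $M'$, not for $M$. The theorem asserts the syntactic identity $M = \lambda r_1 \ldots p_1.N$, so you still have to return to $M$. This is doable, but you omit the step. $M'$ is $\beta$-normal, well typed and satisfies both equations, so Lemma~\ref{lem:ring1} would place its body $y\,\vec P\,x_{k+1}\cdots p_1$ in $\calR_1$. However, the last argument of a member of $\calR_1$ is either an element of $\calQ_1$ or an abstraction $\lambda p_2.Q$, never the bare variable $p_1$, which gives the contradiction.

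\emph{The direct route via the $u$-equation.} This is only a gesture, and that equation by itself cannot exclude the case. The normal but ill-typed term $\lambda r_1\ldots r_L z_\0 z_\1 z_\star.\,z_\0\,z_\1\,(\lambda y.z_\1)$ has $L+3$ binders and still satisfies it, because $(\lambda h.I)\,u\,(\lambda y.u)\,I =_\beta u$. Typing has to be used.

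\emph{How to close it.} In the paper's proof this case is absorbed by the case analysis of the normal form while peeling binders. It is discharged by typing alone. At depth $k < L+4$, a body that is not an abstraction has the form $y\,\vec P$, with $y$ one of the first $k$ binders. Its type is then a final segment of $\Gamma_1(r_i)$, $\Gamma_1(z_\0)$, $\kappa$ or $\Gamma_1(z_\star)$, each of which has at most $|\calA|+2$ arguments. The required type has $L+4-k+|\calA|$ arguments, which excludes $k \le L+1$. For $k = L+2$ and $k = L+3$ the required types are $\Gamma_1(z_\star) \to (\kappa\to\kappa)\to\kappa$ and $(\kappa\to\kappa)\to\kappa$. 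Comparing first argument types with the few remaining candidates ($\Gamma_1(r_i)$, $\kappa\to\kappa$, and $((\kappa\to\kappa)\to\kappa)\to\kappa$) excludes them as well. Hence $M$ has at least $L+4$ leading abstractions, and the rest of your argument goes through unchanged.
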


\begin{proof}
Induction on the size of $M$, case analysis of the normal form, and Lemma~\ref{lem:ring1}.
\end{proof}

\begin{exa}\label{xmp:H}
Assume $\frakR = \{\0\0 \Rightarrow \2\2, \0\2 \Rightarrow \1\1, \2\0 \Rightarrow \1\1\}$ over the alphabet $\{\0,\1,\2\}$ from Example~\ref{xmp:ssts}.
Let $N := r_1\,p_2\,(r_2\,p_1\,(r_3\,p_3\,z_\1))$ and
$M := z_\star\,p_1\,(\lambda p_2.z_\star\,p_2\,(\lambda p_3.z_\0\,p_3\,N))$.
We have $N \in \calQ_3$ and $M \in \calR_1$.
In congruence with Theorem~\ref{thm:shape} we have:\\
$\begin{array}{ll}
(\lambda r_1 r_2 r_3.\lambda z_\0 z_\1 z_\star p_1.M)\,I\,I\,I\,(\lambda h.I)\,u\,(\lambda h.\lambda g.g\,I)\,I &=_\beta u\\
(\lambda r_1 r_2 r_3.\lambda z_\0 z_\1 z_\star p_1.M)\,H_R\,H_R\,H_R\,H_\0\,\pi_\1\,H_\star\,\delta_\bullet &=_\beta \pi_{\$}
\end{array}$\\
While Theorem~\ref{thm:shape} only establishes \enquote{well-formedness}, the term $M$ has an intended meaning:
An initial word of length $2$ is expanded twice (using $z_\star$) to a word of length $4$ and initialized to $\0$s (using $z_\0$).
The introduced variables $p_2$ and $p_3$ are used to address positions in the longer word.
The intended meaning of $N$ is that the first rule (using $r_1$) is applied at position $2$ (using $p_2$), followed by the second rule at position $1$, and third rule at position $3$ accordingly.
The resulting word contains only $\1$s (indicated by $z_\1$).
Overall, this corresponds to $\0\0\0\0 \Rightarrow_\frakR \0\2\2\0 \Rightarrow_\frakR \1\1\2\0 \Rightarrow_\frakR \1\1\1\1$.
\end{exa}

The following Example~\ref{xmp:no-H} shows that we still need to represent actual rewriting steps, as the terms $H_R$ do not distinguish distinct rewriting rules.

\begin{exa}\label{xmp:no-H}
Assume $\frakR = \{\0\0 \Rightarrow \1\0, \0\1 \Rightarrow \1\1\}$ over the alphabet $\{\0,\1\}$ from Example~\ref{xmp:neg-ssts}.
Let $N := r_1\,p_1\,(r_2\,p_2\,z_\1)$ and
$M := z_\star\,p_1\,(\lambda p_2.z_\0\,p_2\,N)$.
We have $N \in \calQ_2$ and $M \in \calR_1$.
In congruence with Theorem~\ref{thm:shape} we have:\\
$\begin{array}{ll}
(\lambda r_1 r_2.\lambda z_\0 z_\1 z_\star p_1.M)\,I\,I\,(\lambda h.I)\,u\,(\lambda h.\lambda g.g\,I)\,I &=_\beta u\\
(\lambda r_1 r_2.\lambda z_\0 z_\1 z_\star p_1.M)\,H_R\,H_R\,H_\0\,\pi_\1\,H_\star\,\delta_\bullet &=_\beta \pi_{\$}
\end{array}$\\
While the term $M$ represents a valid word expansion to length $3$ and initialization to $\0\0\0$, it does not represent a valid sequence of rewriting steps.
While the first step $\0\0\0 \Rightarrow_\frakR \1\0\0$ is represented correctly by using the subterm $r_1\,p_1$, the subterm $r_2\,p_2\,z_\1$ means that next the rule $\0\1 \Rightarrow \1\1$ is applied at position $2$, which is invalid.
\end{exa}

Having only \enquote{well-formed} terms to consider (cf. Example~\ref{xmp:ad-hoc} and Example~\ref{xmp:no-ad-hoc}), next we focus on the functional properties of rewriting.

\subsection*{Semantic Constraints}

We formulate typed terms (Definition~\ref{def:G}) and $\beta$-equivalence constraints characterizing word expansion (Lemma~\ref{lem:R_ssts_sem}) and rewriting (Lemma~\ref{lem:Q_ssts_sem}).
The presented terms are programs which realize the intended meaning (Example~\ref{xmp:H}) of \enquote{well-formed} terms in $\calQ_m$ and $\calR_m$.
Specifically, $G_\star$ realizes word expansion, $G_\0$ realizes initialization with $\0$s, $G_{ab \Rightarrow cd}$ realizes rule $ab \Rightarrow cd$ application, and $G_j^i$ controls the effect at position $i$ for rule application at position $j$.
A~word of length $m$ is represented by $m + 1$ right-hand sides of $\beta$-equivalences.

\begin{defi}[Typed Terms $G_\star$, $G_\0$, $G_{ab \Rightarrow cd}$, $G_i^j$]
\label{def:G}
\begin{align*}
G_\star := &\lambda h.\lambda g.\lambda s_\0 s_\1 \ldots s_\K s_\$ s_\bullet s_\top s_\bot.\case{h\,\pi_\top}{s_\bot\\
&\hspace{1.0em} \mid \bullet \mapsto \caseelse{g\,\delta_\bullet}{\0 \mapsto s_\0 \mid \$ \mapsto \caseelse{g\,\delta_\0}{\1 \mapsto s_\$}{s_\bot}}{s_\bot}\\
&\hspace{1.0em} \mid \0 \mapsto \caseelse{g\,\delta_\1}{\0 \mapsto s_\1}{s_\bot}\\
&\hspace{1.0em} \mid \1 \mapsto \caseelse{g\,\delta_\bullet}{\0 \mapsto s_\0}{s_\bot}}\\
G_\0 := &\lambda h.\lambda x. \lambda s_\0 s_\1 \ldots s_\K s_\$ s_\bullet s_\top s_\bot.\case{h\,\pi_\top}{s_\bot\\
&\quad\mid \bullet \mapsto \caseelse{x}{\0 \mapsto s_\0 \mid \1 \mapsto s_\$}{s_\bot}\\
&\quad\mid\0 \mapsto \caseelse{x}{\0 \mapsto s_\1}{s_\bot}\\
&\quad\mid\1 \mapsto \caseelse{x}{\0 \mapsto s_\0 }{s_\bot}}\\
G_{ab \Rightarrow cd} := &\lambda h.\lambda x.\lambda s_\0 s_\1 \ldots s_\K s_\$ s_\bullet s_\top s_\bot.\case{h\,\pi_\top}{s_\bot\\
&\quad\mid\bullet \mapsto x\,s_\0\,s_\1\, \ldots s_\K\, s_\$\, s_\bullet\, s_\top\, s_\bot\\
&\quad\mid\0 \mapsto \caseelse{x}{d \mapsto s_b}{s_\bot}\\
&\quad\mid\1 \mapsto \caseelse{x}{c \mapsto s_a}{s_\bot}}\\
G_j^i := &\begin{cases}
\delta_\1 & \text{if } i = j\\
\delta_\0 & \text{if } i = j + 1\\
\delta_\bullet & \text{else}
\end{cases}
\end{align*}
\[\emptyset \vdash G_\star : \Gamma_m(z_\star) \qquad\quad
\emptyset \vdash G_\0 : \Gamma_m(z_\0) \qquad\quad
\emptyset \vdash G_{ab \Rightarrow cd} : \Gamma_m(r_i) \text{ for } i \in \{1, \ldots, L\}\]
\end{defi}

Similarly to substitutions $S_F$ and $S_H$, we introduce the following substitution $S_G$.

\begin{defi}[Substitution $S_G$]\label{def:S_G}
\coqleft{L834}%
\[S_G(z_\star) := G_\star \qquad
S_G(z_\1) := \pi_\1 \qquad
S_G(z_\0) := G_\0 \qquad
S_G(r_j) := G_{R_j} \text{ for } j \in \{1, \ldots, L\}\]
\end{defi}

The following Example~\ref{xmp:rule} illustrates how the term $G_{ab \Rightarrow cd}$ together with the term $G_j^i$ represent the effect of rule $ab \Rightarrow cd$ applied at position $j$ on the symbol at position $i$.

\begin{exa}\label{xmp:rule}
Consider for symbols $\0, \1, \ldots, \5$ an application of the rule $\1\2 \Rightarrow \4\5$ at position $2$ in order to rewrite the word $\0\1\2\3$ to $\0\4\5\3$.
Accordingly, we have:
\begin{description}
\item[Position $1$] $G_{\1\2 \Rightarrow \4\5}\,G_2^1\,\pi_\0 =_\beta \pi_\0$
\item[Position $2$] $G_{\1\2 \Rightarrow \4\5}\,G_2^2\,\pi_{\4} =_\beta \pi_\1$
\item[Position $3$] $G_{\1\2 \Rightarrow \4\5}\,G_2^3\,\pi_{\5} =_\beta \pi_\2$
\item[Position $4$] $G_{\1\2 \Rightarrow \4\5}\,G_2^4\,\pi_\3 =_\beta \pi_\3$
\end{description}
Let us also inspect an illegal rewriting step with rule $\1\2 \Rightarrow \4\5$ at position~$1$, for which we have $G_{\1\2 \Rightarrow \4\5}\,G_1^1\,\pi_\0 =_\beta \pi_\bot$.
This means the symbol $\0$ cannot be at position $1$ as the result of the application of rule $\1\2 \Rightarrow \4\5$ at position $1$.
\end{exa}

The above observation is generalized for terms in $\calQ_m$ in the following Lemma~\ref{lem:Q_ssts_sem}.

\begin{lem}\label{lem:Q_ssts_sem}
\reversemarginpar\coqleft{L2106}%
For $m > 0$, let $a_1, \ldots, a_{m + 1} \in \{\0, \1, \ldots, \K\}$ and $M \in \calQ_m$. If $\Gamma_m \vdash M : \kappa$,\\
$S_G(M)[p_1 := G_1^0, \ldots, p_m := G_m^0] =_\beta \pi_\1$, and
$S_G(M)[p_1 := G_1^i, \ldots, p_m := G_m^i] =_\beta \pi_{a_i}$ for $i \in \{1, \ldots, m+1\}$,
then $a_1 \ldots a_{m + 1} \Rightarrow_\frakR^* \1^{m+1}$.
\end{lem}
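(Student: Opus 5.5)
The plan is to prove the statement by induction on the derivation of $M \in \calQ_m$ (Definition~\ref{def:ring}), generalizing over the word $a_1 \ldots a_{m+1}$. For $i \in \{0, \ldots, m+1\}$ write $\sigma_i$ for the substitution $[p_1 := G_1^i, \ldots, p_m := G_m^i]$. I will use one auxiliary fact. If $\Gamma_m \vdash M' : \kappa$, then $S_G(M')\sigma_i$ is a closed term of type $\kappa$, because every $S_G(x)$ and every $G_j^i$ is closed and has the type $\Gamma_m(x)$, resp.\ $\kappa \to \kappa$. By strong normalization and subject reduction its $\beta$-normal form is a closed normal inhabitant of $\kappa$. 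By a short case analysis of normal forms, that inhabitant is a projection $\pi_b$ for some $b \in \calA$.

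In the base case $M = z_\1$ we have $S_G(M)\sigma_i = \pi_\1$ for every $i$. Injectivity of projections up to $=_\beta$ (distinct normal forms plus confluence) then gives $a_i = \1$ for all $i \in \{1,\ldots,m+1\}$. Hence $a_1\ldots a_{m+1} = \1^{m+1}$, and the claim holds by reflexivity of $\Rightarrow_\frakR^*$.

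In the inductive case $M = r_k\,p_j\,M'$ with $M' \in \calQ_m$, let $R_k = (ab \Rightarrow cd)$. The case $r_k\,(\lambda w.p_j\,w)\,M'$ is handled identically: after substitution, $\lambda w.G_j^i\,w$ applied to $\pi_\top$ $\beta$-reduces to $G_j^i\,\pi_\top$, which is the only way $h$ is used inside $G_{ab\Rightarrow cd}$. Inversion of typing gives $\Gamma_m \vdash M' : \kappa$. By the auxiliary fact, pick $b_i \in \calA$ with $S_G(M')\sigma_i =_\beta \pi_{b_i}$ for $i \in \{0,\ldots,m+1\}$. Then $S_G(M)\sigma_i =_\beta G_{ab\Rightarrow cd}\,G_j^i\,\pi_{b_i}$. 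Lemma~\ref{lem:delta} gives $G_j^i\,\pi_\top =_\beta \pi_\1$ if $i = j$, $\pi_\0$ if $i = j+1$, and $\pi_\bullet$ otherwise. Unfolding the \textbf{case} constructs with the case lemma yields three situations.
\begin{itemize}
\item For $i \notin \{j, j+1\}$, which includes $i = 0$ since $j \geq 1$, the result is $\pi_{b_i}$. So $b_0 = \1$ and $b_i = a_i$.
\item For $i = j$, the result is $\pi_a$ if $b_j = c$ and $\pi_\bot$ otherwise. Since $a_j \neq \bot$, this forces $b_j = c$ and $a_j = a$.
\item For $i = j+1 \le m+1$, the result is $\pi_b$ if $b_{j+1} = d$ and $\pi_\bot$ otherwise. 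This similarly forces $b_{j+1} = d$ and $a_{j+1} = b$.
\end{itemize}
In particular every $b_i$ with $i \geq 1$ lies in $\{\0,\ldots,\K\}$. Thus $a_1\ldots a_{m+1} = x\,ab\,y \Rightarrow_\frakR x\,cd\,y = b_1\ldots b_{m+1}$. The induction hypothesis applied to $M'$ and $b_1 \ldots b_{m+1}$, with the $i=0$ premise given by $b_0 = \1$, yields $b_1\ldots b_{m+1} \Rightarrow_\frakR^* \1^{m+1}$, and the two steps compose.

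The main obstacle I expect is the auxiliary fact that closed terms of type $\kappa$ are $\beta$-equal to projections. This requires normalization and subject reduction, plus the observation that a closed normal term of type $\kappa$ is $\lambda s_\0\ldots s_\bot.s_b$. It also requires distinct projections to be $\beta$-distinct, so that the $\pi_\bot$ outcomes are genuinely excluded. In a mechanization this is the piece needing real infrastructure; the rest is computation with Lemma~\ref{lem:delta} and the case lemma.
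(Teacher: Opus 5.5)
Your proof is correct and follows the paper's route. Like the paper, you induct on the structure of $M \in \calQ_m$, generalized over the word, and settle the rule-application case by evaluating $G_{ab \Rightarrow cd}\,G_j^i$ at each position via Lemma~\ref{lem:delta} and the \textbf{case} lemma. As a small simplification, the auxiliary fact you flag as the main obstacle needs no normalization or normal-form analysis. The same induction on $\calQ_m$ shows that $S_G(M')[p_1 := G_1^i, \ldots, p_m := G_m^i]$ is $\beta$-equal to a projection, because $S_G(z_\1) = \pi_\1$ and $G_{ab \Rightarrow cd}\,G_j^i\,\pi_x$ is $\beta$-equal to one of $\pi_x, \pi_a, \pi_b, \pi_\bot$ for every $x \in \calA$.
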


\begin{proof}
Induction on the size of $M$ and case analysis using Definition~\ref{def:ring}.
\end{proof}

The following Example~\ref{xmp:G_Q} builds upon the previous Example~\ref{xmp:H} and illustrates the intended meaning (rewriting $\0$s to $\1$s) of a \enquote{well-formed} example term in $\calQ_3$. 

\begin{exa}\label{xmp:G_Q}
Assume $\frakR = \{\0\0 \Rightarrow \2\2, \0\2 \Rightarrow \1\1, \2\0 \Rightarrow \1\1\}$ over the alphabet $\{\0,\1,\2\}$, and consider the term $N = r_1\,p_2\,(r_2\,p_1\,(r_3\,p_3\,z_\1))$ from Example~\ref{xmp:H}.
Replacing $G_j^i$ accordingly for $i \in \{0, \ldots, 4\}$ and $j \in \{1, 2, 3\}$, we have the following $\beta$-equivalences $(0)$ -- $(4)$.
$\begin{array}{lll}
(0) &\quad &S_G(N)[p_1 := \delta_\bullet, p_2 := \delta_\bullet, p_3 := \delta_\bullet] =_\beta \pi_\1 \\
(1) &\quad &S_G(N)[p_1 := \delta_\1, p_2 := \delta_\bullet, p_3 := \delta_\bullet] =_\beta \pi_\0\\
(2) &\quad &S_G(N)[p_1 := \delta_\0, p_2 := \delta_\1, p_3 := \delta_\bullet] =_\beta \pi_\0\\
(3) &\quad &S_G(N)[p_1 := \delta_\bullet, p_2 := \delta_\0, p_3 := \delta_\1] =_\beta \pi_\0\\
(4) &\quad &S_G(N)[p_1 := \delta_\bullet, p_2 := \delta_\bullet, p_3 := \delta_\0] =_\beta \pi_\0
\end{array}$

In accordance with Lemma~\ref{lem:Q_ssts_sem}, we have that $\0^4 \Rightarrow_\frakR^* \1^4 $.
Equivalences $(1)$ -- $(4)$ witness the particular rewriting steps at positions $1$ -- $4$ (cf.~Example~\ref{xmp:H} and Example~\ref{xmp:rule}).
Illegal rule application (cf.\,Example~\ref{xmp:no-H}) would result in $\pi_\bot$ as one of the right-hand sides.
\end{exa}

Complementarily to word rewriting, the following Lemma~\ref{lem:R_ssts_sem} characterizes word expansion and initialization with $\0$s.

\begin{lem}\label{lem:R_ssts_sem}
\reversemarginpar\coqleft{L2017}%
If $M \in \calR_1$ such that $\Gamma_1 \vdash M : \kappa$,
$S_G(M)[p_1 := \delta_\bullet] =_\beta \pi_\$$,\\
$S_G(M)[p_1 := \delta_\1] =_\beta \pi_\0$, and
$S_G(M)[p_1 := \delta_\0] =_\beta \pi_\1$,
then there exists an $m > 0$ and an~$N \in \calQ_m$ such that $\Gamma_m \vdash N : \kappa$,
$S_G(N)[p_1 := G_1^0, \ldots, p_m := G_m^0] =_\beta \pi_\1$,\\
and $S_G(N)[p_1 := G_1^i, \ldots, p_m := G_m^i] =_\beta \pi_\0$ for $i \in \{1, \ldots, m+1\}$.
\end{lem}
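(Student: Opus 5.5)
The plan is to prove a generalization of Lemma~\ref{lem:R_ssts_sem} for arbitrary $m > 0$, by induction on the derivation of $M \in \calR_m$ (Definition~\ref{def:ring}). The statement for $\calR_1$ is the special case $m = 1$, because $G_1^0 = \delta_\bullet$, $G_1^1 = \delta_\1$, and $G_1^2 = \delta_\0$. The invariant reads the $m+2$ right-hand sides as the marker $\$$ followed by a word of length $m+1$ of the form $\0^m\1$. This matches how $G_\star$ treats the last position: it flips $\1$ to $\0$ and appends a fresh last symbol.

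\emph{Generalized claim.} Let $m > 0$ and $M \in \calR_m$ with $\Gamma_m \vdash M : \kappa$. Write $\sigma_i := [p_1 := G_1^i, \ldots, p_m := G_m^i]$. Suppose $S_G(M)\sigma_0 =_\beta \pi_\$$, $S_G(M)\sigma_i =_\beta \pi_\0$ for $1 \le i \le m$, and $S_G(M)\sigma_{m+1} =_\beta \pi_\1$. Then there are $m' \ge m$ and $N \in \calQ_{m'}$ satisfying the conclusion of Lemma~\ref{lem:R_ssts_sem} for $m'$.

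A preliminary observation is used throughout. For each $i$, every free variable of $M$ is among $z_\star, z_\0, z_\1, p_1, \ldots, p_m, r_1, \ldots, r_L$, and each of these is replaced by a closed term. So for a subterm $h$ of type $\kappa \to \kappa$, the term $S_G(h)\sigma_i\,\pi_\top$ is closed of type $\kappa$. By strong normalization and inspection of closed normal forms of type $\kappa$, it is $\beta$-equivalent to some projection $\pi_k$. This is what makes the \textbf{case} analysis in $G_\star$ and $G_\0$ exhaustive, even though the argument $N$ in $z_\star\,N\,(\cdot)$ and $z_\0\,N\,(\cdot)$ is arbitrary.

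\emph{Case $M = z_\0\,N\,M'$ with $M' \in \calQ_m$.} Unfold $G_\0$ with $h := S_G(N)\sigma_i$ and $x := S_G(M')\sigma_i$.
\begin{itemize}
\item The result $\pi_\$$ at $i = 0$ forces $h\,\pi_\top =_\beta \pi_\bullet$ and $x =_\beta \pi_\1$.
\item The result $\pi_\0$ for $1 \le i \le m$ is only reachable through branch $\bullet$ or branch $\1$, and in both it forces $x =_\beta \pi_\0$.
\item The result $\pi_\1$ at $i = m+1$ forces branch $\0$, again with $x =_\beta \pi_\0$.
\end{itemize}
Hence $N := M'$ with $m' := m$ witnesses the conclusion. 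Typing of $M'$ under $\Gamma_m$ follows by inversion.

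\emph{Case $M = z_\star\,N\,(\lambda p_{m+1}.M')$ with $M' \in \calR_{m+1}$.} Unfold $G_\star$ with $g := \lambda p_{m+1}.S_G(M')\sigma_i$, and do case analysis on $h\,\pi_\top$ as above.
\begin{itemize}
\item At $i = 0$ the result $\$$ yields $S_G(M')[\sigma_0, p_{m+1} := \delta_\bullet] =_\beta \pi_\$$ and $S_G(M')[\sigma_0, p_{m+1} := \delta_\0] =_\beta \pi_\1$.
\item For $1 \le i \le m$ the result $\0$ (from branch $\bullet$ or $\1$) yields $S_G(M')[\sigma_i, p_{m+1} := \delta_\bullet] =_\beta \pi_\0$.
\item At $i = m+1$ the result $\1$ forces branch $\0$, giving $S_G(M')[\sigma_{m+1}, p_{m+1} := \delta_\1] =_\beta \pi_\0$.
\end{itemize}
Now compare with the $G$'s for $m+1$ positions. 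We have $G_{m+1}^i = \delta_\bullet$ for $i \le m$, $G_{m+1}^{m+1} = \delta_\1$, and $G_{m+1}^{m+2} = \delta_\0$. For $j \le m$ we have $G_j^{m+2} = \delta_\bullet = G_j^0$. So these equivalences are exactly the hypotheses of the claim for $M'$ at $m+1$, and the induction hypothesis applies.

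I expect the main obstacle to be the bookkeeping in the $z_\star$ case. The delicate step is matching the $\sigma_0$-substitution with $p_{m+1} := \delta_\0$ to position $m+2$, which needs $G_j^{m+2} = G_j^0$ for $j \le m$. A further technical point is justifying that each $h\,\pi_\top$ normalizes to a projection, via typing and normal forms.
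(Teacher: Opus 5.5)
Your proposal is correct and takes the same approach as the paper, which likewise generalizes to $M \in \calR_{m'}$ for arbitrary $m' > 0$ and argues by induction with case analysis along Definition~\ref{def:ring}. Your invariant ($\$$ under $\sigma_0$, $\0$ under $\sigma_1,\ldots,\sigma_m$, $\1$ under $\sigma_{m+1}$) and the index bookkeeping $G_j^{m+2} = G_j^0$ for $j \le m$ are what make that induction go through; the points you leave implicit (confluence to separate distinct projections, and $\Gamma_{m+1} \vdash M' : \kappa$ by inversion in the $z_\star$ case) are routine.
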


\begin{proof}
Considering the general case $M \in \calR_{m'}$ for $m' > 0$, induction on the size of $M$ and case analysis using Definition~\ref{def:ring}.
\end{proof}

The following Example~\ref{xmp:G_R} complements the previous Example~\ref{xmp:G_Q} and illustrates the intended meaning (word expansion and initialization with $\0$s) of a \enquote{well-formed} term in the set $\calR_1$.

\begin{exa}\label{xmp:G_R}
Assume $\frakR = \{\0\0 \Rightarrow \2\2, \0\2 \Rightarrow \1\1, \2\0 \Rightarrow \1\1\}$ over the alphabet $\{\0,\1,\2\}$, and consider the terms
$M_1 := z_\star\,p_1\,(\lambda p_2.M_2)$, $M_2 := z_\star\,p_2\,(\lambda p_3.M_3)$,
$M_3 := z_\0\,p_3\,N$, and $N := r_1\,p_2\,(r_2\,p_1\,(r_3\,p_3\,z_\1))$ from Example~\ref{xmp:G_Q}.
Proceeding bottom up, we have $M_3 \in \calR_3$ and the following $\beta$-equivalences hold:\\
$\begin{array}{l}
S_G(M_3)[p_1 := \delta_\bullet, p_2 := \delta_\bullet, p_3 := \delta_\bullet] =_\beta \pi_\$\\
S_G(M_3)[p_1 := \delta_\1, p_2 := \delta_\bullet, p_3 := \delta_\bullet] =_\beta \pi_\0\\
S_G(M_3)[p_1 := \delta_\0, p_2 := \delta_\1, p_3 := \delta_\bullet] =_\beta \pi_\0\\
S_G(M_3)[p_1 := \delta_\bullet, p_2 := \delta_\0, p_3 := \delta_\1] =_\beta \pi_\0\\
S_G(M_3)[p_1 := \delta_\bullet, p_2 := \delta_\bullet, p_3 := \delta_\0] =_\beta \pi_\1
\end{array}$\\
Additionally, $M_2 \in \calR_2$, $M_1 \in \calR_1$, and the following $\beta$-equivalences hold:\\
$\begin{array}{l}
S_G(M_2)[p_1 := \delta_\bullet, p_2 := \delta_\bullet] =_\beta \pi_\$\\
S_G(M_2)[p_1 := \delta_\1, p_2 := \delta_\bullet] =_\beta \pi_\0\\
S_G(M_2)[p_1 := \delta_\0, p_2 := \delta_\1] =_\beta \pi_\0\\
S_G(M_2)[p_1 := \delta_\bullet, p_2 := \delta_\0] =_\beta \pi_\1
\end{array}$
\qquad
$\begin{array}{l}
S_G(M_1)[p_1 := \delta_\bullet] =_\beta \pi_\$\\
S_G(M_1)[p_1 := \delta_\1] =_\beta \pi_\0\\
S_G(M_1)[p_1 := \delta_\0] =_\beta \pi_\1
\end{array}$

In combination with the previous Example~\ref{xmp:G_Q}, the term $M_1 \in \calR_1$ represents word expansion up to length $4$, followed by initialization with $\0$s, and rewriting to $\1$s.
\end{exa}

Next, we combine syntactic and semantic constraints in the following key Lemma~\ref{lem:ssts_to_constr}.

\begin{lem}\label{lem:ssts_to_constr}
\coqleft{L2758}%
There exists an $n \in \bbN$ such that $\0^{n+1} \Rightarrow_\frakR^* \1^{n+1}$ iff
there exists a term $M$ in normal form, such that the following conditions hold:\\
$\emptyset \vdash M : \Gamma_1(r_1) \to \cdots \to \Gamma_1(r_L) \to \Gamma_1(z_\0) \to \Gamma_1(z_\1) \to \Gamma_1(z_\star) \to \Gamma_1(p_1) \to \kappa$,\\
$\begin{array}{ll}
M\,I \ldots I\,(\lambda h.I)\,u\,(\lambda h.\lambda g.g\,I)\,I &=_\beta u,\\
M\,H_R \ldots H_R\,H_\0\,\pi_\1\,H_\star\,\delta_\bullet &=_\beta \pi_{\$},\\
M\,G_{R_1} \ldots G_{R_L}\,G_\0\,\pi_\1\,G_\star\,\delta_\bullet &=_\beta \pi_{\$},\\
M\,G_{R_1} \ldots G_{R_L}\,G_\0\,\pi_\1\,G_\star\,\delta_\1 &=_\beta \pi_{\0},\\
M\,G_{R_1} \ldots G_{R_L}\,G_\0\,\pi_\1\,G_\star\,\delta_\0 &=_\beta \pi_{\1}.
\end{array}$
\end{lem}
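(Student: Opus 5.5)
The proof splits into the two directions. The backward direction chains the earlier results; the forward direction constructs an explicit witness, the same shape as in Examples~\ref{xmp:H}, \ref{xmp:G_Q} and~\ref{xmp:G_R}.

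\emph{Backward direction.} Let $M$ satisfy the conditions. By Theorem~\ref{thm:shape}, $M = \lambda r_1 \ldots r_L.\lambda z_\0 z_\1 z_\star p_1.N'$ with $N' \in \calR_1$. Typing of $M$ gives $\Gamma_1 \vdash N' : \kappa$. Applying $M$ to $G_{R_1},\ldots,G_{R_L},G_\0,\pi_\1,G_\star$ and $\delta_x$ reduces by $\beta$ to $S_G(N')[p_1 := \delta_x]$. The last three equations therefore yield exactly the hypotheses of Lemma~\ref{lem:R_ssts_sem}. That lemma gives $m>0$ and $N \in \calQ_m$ with $\Gamma_m \vdash N : \kappa$, with the $G^0$-instance equal to $\pi_\1$ and every $G^i$-instance equal to $\pi_\0$ for $1 \le i \le m+1$. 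Lemma~\ref{lem:Q_ssts_sem} with $a_i = \0$ then yields $\0^{m+1} \Rightarrow^*_\frakR \1^{m+1}$, so we take $n := m$.

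\emph{Forward direction.} Given a rewriting sequence $\0^{n+1} = w_0 \Rightarrow_\frakR w_1 \Rightarrow_\frakR \cdots \Rightarrow_\frakR w_k = \1^{n+1}$, first make the length usable. If $n = 0$, the word is a single symbol and no rule of shape $ab \Rightarrow cd$ applies, so $k=0$ and $\0 = \1$, which is impossible. Hence $n \ge 1$; set $m := n$ and word length $m+1 \ge 2$.

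Build $N \in \calQ_m$ from the sequence read backwards from $z_\1$. Let $N_k := z_\1$. If step $t$ applies rule $R_i$ at position $j$ (meaning it rewrites positions $j,j+1$), set $N_t := r_i\,p_j\,N_{t+1}$ and $N := N_0$. Then define $M_m := z_\0\,p_m\,N$ and, for $l < m$, $M_l := z_\star\,p_l\,(\lambda p_{l+1}.M_{l+1})$. This gives $M_l \in \calR_l$, and the witness is $M := \lambda r_1\ldots r_L.\lambda z_\0 z_\1 z_\star p_1.M_1$. Typing under $\Gamma_l$ is immediate by induction, and $M$ is in normal form.

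The two syntactic equations are checked by induction on the construction. Under $S_F$, with all $p_j := I$, each $r_i\,p_j\,X$ reduces to $X$ and $z_\0\,p_m\,X$ to $X$. Each $z_\star\,p_l\,(\lambda p_{l+1}.X)$ reduces to $X[p_{l+1} := I]$, so everything collapses to $u$. Under $S_H$, with all $p_j := \delta_\bullet$, the term $N$ evaluates to $\pi_\1$: we have $H_R\,\delta_\bullet\,\pi_\1 = \pi_\1$ because $\delta_\bullet\,\pi_\top = \pi_\bullet$. Then $H_\0$ turns $\pi_\1$ into $\pi_\$$, and $H_\star$ passes $\pi_\$$ upward because $g\,\delta_\bullet$ evaluates to the inner value $\pi_\$$.

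The semantic equations need a strengthened invariant, proved by induction on $t$ from $k$ down to $0$. For every $i \in \{1,\ldots,m+1\}$ the claim is $S_G(N_t)[p_j := G^i_j] =_\beta \pi_{(w_t)_i}$, and also $S_G(N_t)[p_j := G^0_j] =_\beta \pi_\1$. The $G^0$ case holds because all $p_j$ are $\delta_\bullet$, so $G_{ab\Rightarrow cd}$ passes its argument through. The other cases are the computation of Example~\ref{xmp:rule}: $G^i_j$ selects the $\1$-, $\0$- or $\bullet$-branch according to whether $i = j$, $i = j+1$, or neither, and the branch recovers $a$, $b$, or the unchanged symbol respectively. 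For $t=0$ this gives $\pi_\0$ at every position $i \ge 1$.

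The semantic equations for the $\calR$-layer follow by downward induction on $l$, as in Example~\ref{xmp:G_R}. The claim for $M_l$ has $l+2$ cases, $p_1 := \delta_\bullet$ up to the substitution $p_l := \delta_\0$ with all others $\delta_\bullet$, with outputs $\pi_\$,\pi_\0,\ldots,\pi_\0,\pi_\1$. Each step is an unfolding of the branches of $G_\star$ and $G_\0$.

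The main obstacle is keeping the bookkeeping of positions consistent between the $\calR$-layer and the $\calQ$-layer. The substitution values $\delta_\bullet,\delta_\1,\delta_\0$ of the bound $p_{l+1}$ are generated inside $G_\star$, as $g\,\delta_\bullet$, $g\,\delta_\1$, $g\,\delta_\0$. These must match the pattern $G^i_j$ of the inner lemma while the outer $p_l$ case is read through $h\,\pi_\top$. I would first isolate a small lemma that computes $G_\star$ and $G_\0$ on each pair of $\delta$-inputs, then run the induction.
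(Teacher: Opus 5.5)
Your proposal is correct and follows the paper's proof. The backward direction chains Theorem~\ref{thm:shape}, Lemma~\ref{lem:R_ssts_sem} and Lemma~\ref{lem:Q_ssts_sem}, and the forward direction builds $N \in \calQ_n$ by induction on the rewriting sequence and wraps it in $z_\star$/$z_\0$ layers into an $\calR_1$-term, i.e.\ the paper's ``easy converses''. Beyond the paper's sketch, you also handle the case $n = 0$ explicitly and verify the two $S_F$/$S_H$ equations, both of which the paper leaves implicit.
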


\begin{proof}
The direction from left to right proceeds in two steps.
First, by induction on the number of rewriting steps we construct a term $N \in \calQ_n$ (easy converse of Lemma~\ref{lem:Q_ssts_sem}).
Second, by induction on $n$ we construct a term $M' \in \calR_1$ containing $N \in \calQ_n$ as a subterm (easy converse of Lemma~\ref{lem:R_ssts_sem}).
Then, the solution is $M := \lambda r_1 \ldots r_L.\lambda z_\0 z_\1 z_\star p_1.M'$.

The direction from right to left proceeds in two steps.
First, by Theorem~\ref{thm:shape} we have $M = \lambda r_1 \ldots r_L.\lambda z_\0 z_\1 z_\star p_1.M'$ for some $M' \in \calR_1$.
Second, by Lemma~\ref{lem:R_ssts_sem} and Lemma~\ref{lem:Q_ssts_sem} we have $\0^{n+1} \Rightarrow_\frakR^* \1^{n+1}$ for some $n \in \bbN$.
\end{proof}

Finally, we present the combination of constraints from the above Lemma~\ref{lem:ssts_to_constr} as a matching instance $F_\frakR\,\mathsf{X} =_\beta N_\frakR$.
This constitutes the main result of the present work.

\begin{thm}\label{thm:ssts_to_hom}
\coqleft{L2880}%
Problem $\0^+ \Rightarrow^* \1^+$ many-one reduces to higher-order $\beta$-matching.
\end{thm}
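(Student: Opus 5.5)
The plan is to package the five $\beta$-equivalence constraints of Lemma~\ref{lem:ssts_to_constr} into a single matching instance $F_\frakR\,\mathsf{X} = N_\frakR$. A continuation-passing ``tupling'' trick, as in Example~\ref{xmp:no-ad-hoc-combo}, does this. Let $\sigma := \Gamma_1(r_1) \to \cdots \to \Gamma_1(r_L) \to \Gamma_1(z_\0) \to \Gamma_1(z_\1) \to \Gamma_1(z_\star) \to \Gamma_1(p_1) \to \kappa$. Define
\[
F_\frakR := \lambda t.\lambda r.r\,(\lambda u.t\,I \ldots I\,(\lambda h.I)\,u\,(\lambda h.\lambda g.g\,I)\,I)\,(t\,H_R \ldots H_R\,H_\0\,\pi_\1\,H_\star\,\delta_\bullet)\,(t\,\vec G\,\delta_\bullet)\,(t\,\vec G\,\delta_\1)\,(t\,\vec G\,\delta_\0),
\]
where $\vec G$ abbreviates $G_{R_1} \ldots G_{R_L}\,G_\0\,\pi_\1\,G_\star$. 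Correspondingly, let $N_\frakR := \lambda r.r\,(\lambda u.u)\,\pi_\$\,\pi_\$\,\pi_\0\,\pi_\1$. Here $r$ has type $(\kappa\to\kappa)\to\kappa\to\kappa\to\kappa\to\kappa\to\ga$, so $F_\frakR : \sigma \to ((\kappa\to\kappa)\to\kappa\to\kappa\to\kappa\to\kappa\to\ga)\to\ga$ and $N_\frakR$ has the codomain type. Both are closed. Typability follows from the typings stated in Definitions~\ref{def:H} and~\ref{def:G}, together with $\emptyset\vdash\delta_i:\kappa\to\kappa$ and $\emptyset\vdash\pi_i:\kappa$. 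The map $\frakR \mapsto (F_\frakR, N_\frakR, \sigma, \tau)$ is clearly computable.

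Next I would show that for any closed $M$ of type $\sigma$, $F_\frakR\,M =_\beta N_\frakR$ holds iff the five equivalences of Lemma~\ref{lem:ssts_to_constr} hold. The backward direction is immediate by congruence after one $\beta$-step. For the forward direction, I would use the normal form: the left-hand side reduces to $\lambda r.r\,A_1\ldots A_5$ with $A_1 = \lambda u.M\,I\ldots I\ldots u\ldots$. Since $r$ is a free head variable, Church--Rosser and uniqueness of normal forms give $A_k =_\beta$ the corresponding argument of $N_\frakR$. For $A_1$, $\lambda u.(\ldots) =_\beta \lambda u.u$ yields $M\,I\ldots I\,(\lambda h.I)\,u\,(\lambda h.\lambda g.g\,I)\,I =_\beta u$. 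Here $u$ is a bound variable inside the instance, but equivalence under the binder is equivalent to equivalence with $u$ free, which is exactly the form used in Lemma~\ref{lem:ssts_to_constr}.

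Finally, I would combine this with Lemma~\ref{lem:ssts_to_constr}. Since $\0^+\Rightarrow^*\1^+$ asks for $n>0$, which is the same as $n+1$ for $n\in\bbN$, and since every closed typed $M$ has a normal form (strong normalization), preserving typing and $\beta$-equivalence, the restriction of Lemma~\ref{lem:ssts_to_constr} to normal forms is harmless. Hence $\frakR$ is a positive instance iff $F_\frakR\,\mathsf{X}=N_\frakR$ is solvable.

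The main obstacle is the decomposition step, namely extracting the individual equations from $\lambda r.r\,A_1\ldots A_5 =_\beta \lambda r.r\,B_1\ldots B_5$. I would discharge it by confluence: both sides have $\beta$-normal forms with the same head variable $r$, so the arguments' normal forms coincide. In a mechanization this is the only place requiring the confluence/normalization lemmas, and the rest is bookkeeping of typing derivations.
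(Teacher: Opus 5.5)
Your proposal is correct and follows the paper's proof: you build the same tupled instance $F_\frakR\,\mathsf{X} = N_\frakR$, with the same types $\sigma_\frakR$ and $\tau_\frakR$, and combine it with Lemma~\ref{lem:ssts_to_constr}. Beyond the paper's sketch, you also spell out what it leaves implicit: decomposing $\lambda r.r\,A_1\ldots A_5 =_\beta \lambda r.r\,B_1\ldots B_5$ into the five separate equations via confluence, handling the bound variable $u$, and passing to normal forms so the lemma applies.
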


\begin{proof}
Given a simple semi-Thue system $\frakR = \{R_1, \ldots, R_L\}$ due to Lemma~\ref{lem:ssts_to_constr} there exists an $n \in \bbN$ such that $\0^{n+1} \Rightarrow_\frakR^* \1^{n+1}$ iff the instance $F_\frakR\,\mathsf{X} =_\beta N_\frakR$ of higher-order $\beta$-matching is solvable, where
\begin{itemize}
\item $F_\frakR := \lambda x.\lambda y.y\,\begin{array}[t]{@{}l}
(\lambda u.x \underbracket{I \ldots I}_{L \text{ times}}\,(\lambda h.I)\,u\,(\lambda h.\lambda g.g\,I)\,I)\\
(x \underbracket{H_R \ldots H_R}_{L \text{ times}}\,H_\0\,\pi_\1\,H_\star\,\delta_\bullet)\\
(x\,G_{R_1} \ldots G_{R_L}\,G_\0\,\pi_\1\,G_\star\,\delta_\bullet)\\
(x\,G_{R_1} \ldots G_{R_L}\,G_\0\,\pi_\1\,G_\star\,\delta_\1)\\
(x\,G_{R_1} \ldots G_{R_L}\,G_\0\,\pi_\1\,G_\star\,\delta_\0)
\end{array}$
\item $N_\frakR := \lambda y.y\,(\lambda u.u)\,\pi_\$\,\pi_\$\,\pi_\0\,\pi_\1$
\item $\sigma_\frakR := \Gamma_1(r_1) \to \cdots \to \Gamma_1(r_L) \to \Gamma_1(z_\0) \to \Gamma_1(z_\1) \to \Gamma_1(z_\star) \to \Gamma_1(p_1) \to \kappa$
\item $\tau_\frakR := ((\kappa \to \kappa) \to \kappa \to \kappa \to \kappa \to \kappa \to \ga) \to \ga$
\item $\emptyset \vdash F_\frakR : \sigma_\frakR \to \tau_\frakR$
\item $\emptyset \vdash N_\frakR : \tau_\frakR$\qedhere
\end{itemize}
\end{proof}

\begin{thm}\label{thm:hom_undec}
\coqextraleft{LambdaCalculus/HOMatching\_undec.v\#L16}%
Higher-order $\beta$-matching (Problem~\ref{prb:hom}) is undecidable.
\end{thm}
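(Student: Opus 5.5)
The plan is to combine Theorem~\ref{thm:ssts_to_hom} with Theorem~\ref{thm:ssts_undec} in the standard way: undecidability transports backwards along computable many-one reductions. I would argue by contradiction. Suppose higher-order $\beta$-matching (Problem~\ref{prb:hom}) were decidable by some procedure $D$. Given a simple semi-Thue system $\frakR$, compute the instance $(F_\frakR, N_\frakR, \sigma_\frakR, \tau_\frakR)$ from the proof of Theorem~\ref{thm:ssts_to_hom} and run $D$ on it. Together these yield a decision procedure for Problem $\0^+ \Rightarrow^* \1^+$, which contradicts Theorem~\ref{thm:ssts_undec}.

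The first step is to make sure the reduction is a legitimate instance map. The terms $F_\frakR$, $N_\frakR$ and the types $\sigma_\frakR$, $\tau_\frakR$ are built by a simple syntactic recursion from the finite list of rules $R_1, \ldots, R_L$ and the finite alphabet. The projections $\pi_i$, the terms $\delta_i$, $H_\star$, $H_\0$, $H_R$, $G_\star$, $G_\0$ and $G_{R_j}$ are all explicit closed terms, so the map is computable. The typing side conditions $\emptyset \vdash F_\frakR : \sigma_\frakR \to \tau_\frakR$ and $\emptyset \vdash N_\frakR : \tau_\frakR$ hold uniformly. These conditions are needed so that the image really lies in the domain of Problem~\ref{prb:hom}.

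One small point needs care. Theorem~\ref{thm:ssts_to_hom} was stated with the standing assumption $L > 0$, so the reduction function must treat rule sets with $L = 0$ separately. In that case there are no rules, so $\0^n \Rightarrow^*_\frakR \1^n$ fails for every $n > 0$. The reduction can therefore output a fixed negative matching instance. One example is $F := \lambda x.\lambda a b.a$ at type $\ga \to \ga \to \ga \to \ga$ matched against $\lambda a b.b$. This instance is unsolvable because $F\,M$ always reduces to $\lambda a b.a$, which is not $\beta$-equivalent to $\lambda ab.b$.

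The second step is correctness of the reduction, which follows from Lemma~\ref{lem:ssts_to_constr}. The index shift from $n > 0$ in Problem~\ref{prb:ssts01} to $n+1$ in that lemma is harmless. Normal forms may be assumed because of strong normalization and confluence: any solution $M$ can be replaced by its normal form, and $F_\frakR\,M =_\beta N_\frakR$ unpacks (by reducing $y$-headed terms and using confluence) into exactly the five-plus-one equations of Lemma~\ref{lem:ssts_to_constr}.

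I expect no real obstacle. The only mildly delicate point is this unpacking of the single equation $F_\frakR\,M =_\beta N_\frakR$ into the componentwise equations, which I would justify through uniqueness of $\beta$-normal forms for the head-variable application $y\,\ldots$. That justification is already part of the proof of Theorem~\ref{thm:ssts_to_hom}, so here one just cites it.
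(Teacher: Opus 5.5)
Your proposal is correct and is the paper's argument: undecidability is transferred from Problem $\0^+ \Rightarrow^* \1^+$ (Theorem~\ref{thm:ssts_undec}) along the computable many-one reduction of Theorem~\ref{thm:ssts_to_hom}. The extra checks you spell out are sound but belong to the proof of Theorem~\ref{thm:ssts_to_hom}, which the paper simply cites; these are computability of $\frakR \mapsto (F_\frakR, N_\frakR, \sigma_\frakR, \tau_\frakR)$, the $L = 0$ fallback instance, the index shift, and splitting $F_\frakR\,M =_\beta N_\frakR$ into the equations of Lemma~\ref{lem:ssts_to_constr}.
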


\begin{proof}
By reduction from the undecidable Problem $\0^+ \Rightarrow^* \1^+$ (Theorem~\ref{thm:ssts_undec} and Theorem~\ref{thm:ssts_to_hom}).
\end{proof}

\begin{rem}\label{rem:order}
The \emph{order} of a type is the maximal nesting depth of the arrow type constructor to the left, starting by $\text{order}(\ga) = 1$.
In the proof of the above Theorem~\ref{thm:ssts_to_hom} we have $\text{order}(\sigma_\frakR) = 1 + \text{order}(\Gamma_1(z_\star)) = 6$.
This agrees with Loader's result that $\beta$-matching is undecidable at order $6$.
\end{rem}

\section{Mechanization}\label{sec:mech}

This section provides a brief overview over the mechanization of undecidability of higher-order $\beta$-matching (Theorem~\ref{thm:hom_undec}) using the Rocq Prover~\cite{Coq_2023}.
The mechanization is axiom-free and spans approximately $4000$ lines of code, consisting of the following parts:
\begin{itemize}
\item \texttt{HOMatching.v} contains definitions of the simply typed $\lambda$-calculus~\coqextra{LambdaCalculus/HOMatching.v\#L31} and higher-order $\beta$-matching~\coqextra{LambdaCalculus/HOMatching.v\#L37}.
\item \texttt{Util/stlc\_facts.v} and \texttt{Util/term\_facts.v} contain basic properties of the simply typed $\lambda$-calculus, such as confluence of $\beta$-reduction~\coqextra{LambdaCalculus/Util/term\_facts.v\#L1056}, substitution lemmas~\coqextra{LambdaCalculus/Util/stlc\_facts.v\#L100}, and type preservation properties~\coqextra{LambdaCalculus/Util/stlc\_facts.v\#L116}.
\item \texttt{Reductions/SSTS01\_to\_HOMbeta.v} contains the reduction from Problem $\0^+ \Rightarrow^* \1^+$ to higher-order $\beta$-matching~\coq{L2880}.
\item \texttt{HOMatching\_undec.v} contains the undecidability result for higher-order $\beta$-matching~\coqextra{LambdaCalculus/HOMatching\_undec.v\#L16}.
\end{itemize}

The simple type system \texttt{stlc} is mechanized in \texttt{HOMatching.v}, borrowing the existing term definitions from the library~\coqextra{L/L.v\#L8}, for which variable binding is addressed via the unscoped de Bruijn approach~\cite{deBruijn72}.
The proposition \texttt{stlc~Gamma~M~t} mechanizes that the term \texttt{M} is assigned the simple type~\texttt{t} in the simple type environment \texttt{Gamma}.
\begin{lstlisting}[mathescape]
Inductive ty : Type :=
  | atom (* type variable *)
  | arr (s t : ty). (* function type *)

Inductive term : Type :=
  | var (n : nat) : term (* term variable *)
  | app (s : term) (t : term) : term (* application *)
  | lam (s : term). (* abstraction *)

Inductive stlc (Gamma : list ty) : term -> ty -> Prop :=
  | stlc_var x t : nth_error Gamma x = Some t ->
      stlc Gamma (var x) t (* variable rule *)
  | stlc_app M N s t : stlc Gamma M (arr s t) -> stlc Gamma N s ->
      stlc Gamma (app M N) t (* application rule *)
  | stlc_lam M s t : stlc (cons s Gamma) M t ->
      stlc Gamma (lam M) (arr s t). (* abstraction rule *)
\end{lstlisting}

Higher-order $\beta$-matching is mechanized as the predicate \texttt{HOMbeta}: given terms \texttt{F} of type \texttt{arr s t} and \texttt{N} of type \texttt{t}, is there a simply typed term \texttt{M} of type \texttt{s} such that \texttt{app F M} is $\beta$-equivalent (reflexive, symmetric, transitive closure of \texttt{step}) to \texttt{N}?
\begin{lstlisting}[mathescape]
Definition HOMbeta : { '(s, t, F, N) : (ty * ty * term * term) 
  | stlc nil F (arr s t) /\ stlc nil N t } -> Prop :=
    fun '(exist _ (s, t, F, N) _) =>
      exists (M : term), stlc nil M s /\
 clos_refl_sym_trans term step (app F M) N.
\end{lstlisting}

The proposition \lstinline|undecidable HOMbeta|~\coqextra{LambdaCalculus/HOMatching\_undec.v\#L16} mechanizes the undecidability of the predicate \texttt{HOMbeta}, relying on the following library definition~\cite[Chapter~19]{forster2021computability}.
A predicate \texttt{p} is undecidable, if existence of a computable decider for \texttt{p} implies recursive co-enumerability of the (Turing machine) Halting Problem.
\begin{lstlisting}[mathescape]
Definition undecidable {X} (p : X -> Prop) :=
 decidable p -> enumerable (complement SBTM_HALT).
\end{lstlisting}
Since the Halting Problem is recursively enumerable, decidability of \texttt{p} would imply decidability of the Halting Problem.

\section{On Intersection Type Inhabitation and $\lambda$-Definability}\label{sec:uniform}

We conclude the technical presentation with the following observation:
the presented approach reducing Problem $\0^+ \Rightarrow^* \1^+$ to higher-order $\beta$-matching is easily transferred to intersection type inhabitation and $\lambda$-definability.

The following Remark~\ref{rem:G} shows the structure of the corresponding finite model with respect to the present construction.

\begin{rem}\label{rem:G}
Terms in Definition~\ref{def:G} realize certain finite functions as follows.
\begin{itemize}
\item $\delta_i$ for $i \in \calA$ realizes a member of the finite function family specified by the partial function table ${\setlength{\arraycolsep}{1pt}\left(\begin{array}{l}
\top \mapsto i\end{array}\right)}$.
\item $G_\0$ realizes a member of the family specified by ${\left(\begin{array}{l}
(\top \mapsto \bullet) \mapsto (\0 \mapsto \0)\\
(\top \mapsto \bullet) \mapsto (\1 \mapsto \$)\\
(\top \mapsto \0) \mapsto (\0 \mapsto \1)\\
(\top \mapsto \1) \mapsto (\0 \mapsto \0)
\end{array}\right)}$.
\item $G_{ab \Rightarrow cd}$ realizes a member of the family specified by ${\left(\begin{array}{l}
(\top \mapsto \1) \mapsto (c \mapsto a)\\
(\top \mapsto \0) \mapsto (d \mapsto b)
\end{array}\right)}$.
\item $G_\star$ realizes a member of the family specified by ${\setlength{\arraycolsep}{1pt}\left(\begin{array}{l}
(\top \mapsto \bullet) \mapsto (((\top \mapsto \bullet) \mapsto \0) \mapsto \0)\\
(\top \mapsto \bullet) \mapsto \left(\left(\begin{array}{l}
(\top \mapsto \bullet) \mapsto \$\\
(\top \mapsto \0) \mapsto \1
\end{array}\right) \mapsto \$\right)\\
(\top \mapsto \0) \mapsto (((\top \mapsto \1) \mapsto \0) \mapsto \1)\\
(\top \mapsto \1) \mapsto (((\top \mapsto \bullet) \mapsto \0) \mapsto \0)
\end{array}\right)}$.
\end{itemize}
\end{rem}

The above specifications follow the intended meaning (Example~\ref{xmp:H}) of the corresponding programs, when used in \enquote{well-formed} terms in $\calQ_m$ and $\calR_m$.
For example, we have $G_\0\,\delta_\bullet\,\pi_\1 =_\beta \pi_\$$, in agreement with the above Remark~\ref{rem:G}.

Let us briefly recapitulate one standard presentation of the intersection type system~\cite{Bakel11}.
Intersection types extend simple types with an associative, commutative, and idempotent constructor $\cap$.
Judgments are of shape $\Gamma \vdash_\cap M : \varphi$, where the intersection type environment $\Gamma$ contains assumptions of shape $(x : \varphi_1 \cap \cdots \cap \varphi_n)$ where $n \geq 0$.

\begin{defi}[Intersection Types]
\label{def:itypes}~
\[
\varphi ::= a \mid \varphi_1 \cap \cdots \cap \varphi_n \to \varphi
\]
where $n \geq 0$ and $a$ ranges over a denumerable set of type variables.
\end{defi}

\begin{defi}[Intersection Type System~{\cite{Bakel11}}]~\medskip\\
\begin{minipage}{0.9\textwidth}
\centering
\begin{tabular}{l}
{\RightLabel{\textnormal{(Ax)}}
\AxiomC{$i \in \{1, \ldots, n\}$ }
\UnaryInfC{$\Gamma, x : \varphi_1 \cap \cdots \cap \varphi_n \vdash_\cap x : \varphi_i$}
\DisplayProof}\quad
{\RightLabel{\textnormal{($\to$I)}}
\AxiomC{$\Gamma, x: \varphi_1 \cap \cdots \cap \varphi_n \vdash_\cap M : \varphi$}
\UnaryInfC{$\Gamma \vdash_\cap \lambda x.M : \varphi_1 \cap \cdots \cap \varphi_n \to \varphi$}
\DisplayProof}
\\\\
{\RightLabel{\textnormal{($\to$E)}}
\AxiomC{$\Gamma \vdash_\cap M : \varphi_1 \cap \cdots \cap \varphi_n \to \varphi$}
\AxiomC{$\Gamma \vdash_\cap N : \varphi_1,\quad\ldots,\quad \Gamma \vdash_\cap N : \varphi_n$}
\BinaryInfC{$\Gamma \vdash_\cap M\,N : \varphi$}
\DisplayProof}
\end{tabular}
\end{minipage}
\end{defi}

The above type system enjoys subject reduction with respect to $\beta$-reduction and strong normalization properties~\cite{Barendregt13book} in the absence of the empty intersection.

The decision problem of intersection type inhabitation is: given an intersection type environment $\Gamma$ and an intersection type $\varphi$, is there a term $M$ such that $\Gamma \vdash_\cap M : \varphi$ is derivable?
Undecidability of intersection type inhabitation was shown by Urzyczyn~\cite{Urzyczyn99}.

\newpage

Let us illustrate in the following Example~\ref{xmp:inh_Q} and Example~\ref{xmp:inh_R} the relationship between intersection type inhabitation and the $\beta$-equivalence construction in Example~\ref{xmp:G_Q} and Example~\ref{xmp:G_R} respectively.

\begin{exa}\label{xmp:inh_Q}
Assume $\frakR = \{\0\0 \Rightarrow \2\2, \0\2 \Rightarrow \1\1, \2\0 \Rightarrow \1\1\}$ over the alphabet $\{\0,\1,\2\}$, and consider the term $N = r_1\,p_2\,(r_2\,p_1\,(r_3\,p_3\,z_\1))$ from Example~\ref{xmp:H}. Let
\begin{align*}
\Gamma := \{&r_1 : (\1 \to \2 \to \0) \cap (\0 \to \2 \to \0) \cap (\bullet \to \0 \to \0) \cap (\bullet \to \1 \to \1) \cap (\bullet \to \2 \to \2),\\
&r_2 : (\1 \to \1 \to \0) \cap (\0 \to \1 \to \2) \cap (\bullet \to \0 \to \0) \cap (\bullet \to \1 \to \1) \cap (\bullet \to \2 \to \2),\\
&r_3 : (\1 \to \1 \to \2) \cap (\0 \to \1 \to \0) \cap (\bullet \to \0 \to \0) \cap (\bullet \to \1 \to \1) \cap (\bullet \to \2 \to \2),\\
&z_\1 : \1\}
\end{align*}
The following judgments $(0)$ -- $(4)$ are derivable:\\
$\begin{array}{lll}
(0) &\quad &\Gamma \cup \{ p_1 : \bullet, p_2 : \bullet, p_3 : \bullet \} \vdash_\cap N : \1 \\
(1) &\quad &\Gamma \cup \{ p_1 : \1, p_2 : \bullet, p_3 : \bullet \} \vdash_\cap N : \0\\
(2) &\quad &\Gamma \cup \{ p_1 : \0, p_2 : \1, p_3 : \bullet \} \vdash_\cap N : \0\\
(3) &\quad &\Gamma \cup \{ p_1 : \bullet, p_2 : \0, p_3 : \1 \} \vdash_\cap N : \0\\
(4) &\quad &\Gamma \cup \{ p_1 : \bullet, p_2 : \bullet, p_3 : \0 \} \vdash_\cap N : \0
\end{array}$

The above judgments are in one-to-one correspondence to $\beta$-equivalences $(0)$ -- $(4)$ from Example~\ref{xmp:G_Q} and witness the particular rewriting steps at positions $1$ -- $4$ (cf.~Example~\ref{xmp:H} and Example~\ref{xmp:rule}).
\end{exa}

\begin{exa}\label{xmp:inh_R}
Assume $\frakR = \{\0\0 \Rightarrow \2\2, \0\2 \Rightarrow \1\1, \2\0 \Rightarrow \1\1\}$ over the alphabet $\{\0,\1,\2\}$, and consider the terms
$M_1 := z_\star\,p_1\,(\lambda p_2.M_2)$, $M_2 := z_\star\,p_2\,(\lambda p_3.M_3)$,
$M_3 := z_\0\,p_3\,N$, and $N := r_1\,p_2\,(r_2\,p_1\,(r_3\,p_3\,z_\1))$ from Example~\ref{xmp:G_Q}.
Consider $\Gamma$ from Example~\ref{xmp:inh_Q} and let
\begin{align*}
\Gamma' := \{&z_\0 : (\bullet \to \0 \to \0) \cap (\bullet \to \1 \to \$) \cap (\0 \to \0 \to \1) \cap (\1 \to \0 \to \0),\\
&z_\star : (\bullet \to (\bullet \to \0) \to \0) \cap (\bullet \to ((\bullet \to \$) \cap (\0 \to \1)) \to \$) \,\cap \\
&\hspace*{1.7em}(\0 \to (\1 \to \0) \to \1) \cap (\1 \to (\bullet \to \0) \to \0)\}
\end{align*}

Proceeding bottom up, the following judgments are derivable:\medskip\\
$\begin{array}{l}
\Gamma \cup \Gamma' \cup \{p_1 : \bullet, p_2 : \bullet, p_3 : \bullet\} \vdash_\cap M_3 : \$\\
\Gamma \cup \Gamma' \cup \{p_1 : \1, p_2 : \bullet, p_3 : \bullet\} \vdash_\cap M_3 : \0\\
\Gamma \cup \Gamma' \cup \{p_1 : \0, p_2 : \1, p_3 : \bullet\} \vdash_\cap M_3 : \0\\
\Gamma \cup \Gamma' \cup \{p_1 : \bullet, p_2 : \0, p_3 : \1\} \vdash_\cap M_3 : \0\\
\Gamma \cup \Gamma' \cup \{p_1 : \bullet, p_2 : \bullet, p_3 : \0\} \vdash_\cap M_3 : \1
\end{array}$\medskip\\
Additionally, the following judgments are derivable:\medskip\\
$\begin{array}{l}
\Gamma \cup \Gamma' \cup \{p_1 : \bullet, p_2 : \bullet\} \vdash_\cap M_2 : \$\\
\Gamma \cup \Gamma' \cup \{p_1 : \1, p_2 : \bullet\} \vdash_\cap M_2 : \0\\
\Gamma \cup \Gamma' \cup \{p_1 : \0, p_2 : \1\} \vdash_\cap M_2 : \0\\
\Gamma \cup \Gamma' \cup \{p_1 : \bullet, p_2 : \0\} \vdash_\cap M_2 : \1
\end{array}$
\qquad
$\begin{array}{l}
\Gamma \cup \Gamma' \cup \{p_1 : \bullet\} \vdash_\cap M_1 : \$\\
\Gamma \cup \Gamma' \cup \{p_1 : \1\} \vdash_\cap M_1 : \0\\
\Gamma \cup \Gamma' \cup \{p_1 : \0\} \vdash_\cap M_1 : \1
\end{array}$

The above judgments are in one-to-one correspondence to $\beta$-equivalences from Example~\ref{xmp:G_R} and witness word expansion up to length $4$, followed by initialization with $\0$s, and rewriting to $\1$s.
\end{exa}

A construction, similar to the above Examples is carried out and mechanized\;\coqextra{IntersectionTypes/Reductions/SSTS01\_to\_CD\_INH.v} systematically~\cite{DudenhefnerR19} as a refinement of Urzyczyn's original undecidability proof.

Finally, let us state the relationship between simple semi-Thue system rewriting, higher-order $\beta$-matching, $\lambda$-definability, and intersection type inhabitation in the following Proposition~\ref{prop:uniform}.

\begin{prop}\label{prop:uniform}
Given a simple semi-Thue system $\frakR$, one can construct simply typed terms $F_\frakR$ and $N_\frakR$, an intersection type $T_\frakR$, and a finite function $\calF_\frakR$ such that the following statements are equivalent:
\begin{enumerate}
\item There exists an $n \in \bbN$ such that $\0^{n+1} \Rightarrow_\frakR^* \1^{n+1}$.
\item The instance $F_\frakR\,\mathsf{X} =_\beta N_\frakR$ of higher-order $\beta$-matching is solvable.
\item The intersection type $T_\frakR$ is inhabited.
\item The finite function $\calF_\frakR$ is $\lambda$-definable.
\end{enumerate}
\end{prop}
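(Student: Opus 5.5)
The plan is to prove the equivalences by linking each of (2), (3), (4) directly to (1), and to take $F_\frakR$, $N_\frakR$ exactly as in the proof of Theorem~\ref{thm:ssts_to_hom}. Then (1)$\Leftrightarrow$(2) is precisely Lemma~\ref{lem:ssts_to_constr}, repackaged as in that proof. One step still has to be checked: the tuple $F_\frakR\,M =_\beta N_\frakR$ must be equivalent to the five separate constraints. For the direction from the single equation to the five constraints, I would argue as follows. After normalizing, both sides have the head $y$ applied to five arguments. Böhm-tree/normal-form comparison, justified by confluence, then yields the componentwise equivalences. The first component $\lambda u.\,M\,I\ldots I\,(\lambda h.I)\,u\,(\lambda h.\lambda g.g\,I)\,I =_\beta \lambda u.u$ unfolds to the $u$-constraint.

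For (1)$\Leftrightarrow$(3), I would take $T_\frakR$ to be the intersection type built in the systematic construction of~\cite{DudenhefnerR19} (Lemma~4.4 there), as illustrated in Examples~\ref{xmp:inh_Q} and~\ref{xmp:inh_R}. Its shape is $\sigma^\cap_{r_1} \to \cdots \to \sigma^\cap_{r_L} \to \sigma^\cap_{z_\0} \to \1 \to \sigma^\cap_{z_\star} \to (\bullet\cap\1\cap\0\text{-indexed}) \to \ldots$. To make the abstraction over $p_1$ carry the three judgments $p_1:\bullet \vdash :\$$, $p_1:\1\vdash:\0$, $p_1:\0\vdash:\1$ at once, the target should be the intersection $(\bullet\to\$)\cap(\1\to\0)\cap(\0\to\1)$. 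The argument-position types are the intersections listed in $\Gamma,\Gamma'$, generalized to arbitrary rules $ab\Rightarrow cd$ following Remark~\ref{rem:G}. The direction (1)$\Rightarrow$(3) assembles the inhabitant exactly as in Lemma~\ref{lem:ssts_to_constr}, namely $\lambda r_1\ldots r_L z_\0 z_\1 z_\star p_1.M'$ with $M'\in\calR_1$, and checks the typing judgments by the same induction used for the $\beta$-equivalences. The converse needs a shape argument for inhabitants in normal form. This is the cited mechanized result, and I would invoke it rather than reprove it.

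For (1)$\Leftrightarrow$(4), I would let $\calF_\frakR$ be the finite function in the full type hierarchy over a finite ground set $D$ for $\ga$. It is determined by the finite functions realized by $I$, $\lambda h.I$, $u$, $\lambda h.\lambda g.g\,I$, $H_\star$, $H_\0$, $H_R$, $G_\star$, $G_\0$, $G_{R_j}$, $\delta_i$, $\pi_i$ per Remark~\ref{rem:G}, and it is required to send these arguments to the prescribed values $u$, $\pi_\$,\pi_\$,\pi_\0,\pi_\1$. Concretely, $\calF_\frakR$ is the element of type $\sigma_\frakR$'s interpretation specified on these five argument tuples, with $u$ interpreted as a distinguished element of the model. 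The direction (1)$\Rightarrow$(4) follows from soundness of interpretation: the term $M$ from Lemma~\ref{lem:ssts_to_constr} satisfies the $\beta$-equations, so its denotation agrees with $\calF_\frakR$ on those points. If $\lambda$-definability asks for a total function, I would instead use the partial-function variant of $\lambda$-definability à la Loader/Joly.

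The converse of (4) is the main obstacle. Lemma~\ref{lem:ssts_to_constr} uses $\beta$-equalities, while definability only gives equalities of denotations in a finite model. In the finite model, "$\lambda$-abstractions going under the encoding", which are excluded syntactically by the free variable $u$, are no longer excluded automatically. I would try to redo Lemmas~\ref{lem:ring2}, \ref{lem:ring1} and Theorem~\ref{thm:shape} semantically. Choose $D$ so that the interpretation of $u$ is a value that no closed term built from the other arguments can produce. This recovers the shape $\calR_1$ by induction on normal forms, using the fact that denotations of normal forms with these heads are computed compositionally. After that, Lemmas~\ref{lem:R_ssts_sem} and~\ref{lem:Q_ssts_sem} only use values of $\pi$'s, so they transfer verbatim.
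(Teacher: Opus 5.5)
Your treatment of (1)$\Leftrightarrow$(2) and (1)$\Leftrightarrow$(3) agrees with the paper. The former is Theorem~\ref{thm:ssts_to_hom} plus the componentwise reading of $F_\frakR\,M =_\beta N_\frakR$, and the latter is delegated to the mechanized reduction of~\cite{DudenhefnerR19}. A minor point: Definition~\ref{def:itypes} allows intersections only to the left of an arrow, so $T_\frakR$ cannot literally end in $(\bullet\to\$)\cap(\1\to\0)\cap(\0\to\1)$; take $T_\frakR$ verbatim from the cited construction.

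The genuine gap is (4)$\Rightarrow$(1), which is false for your $\calF_\frakR$ (specified on the five argument tuples). The full type hierarchy over $D$ is extensional, so it validates $\eta$. Remark~\ref{rem:eta-break} already shows that the free-variable technique does not survive $\eta$. Concretely, put
\[ t := \caseelse{p_1\,\pi_\top}{\top \mapsto s_\1 \mid \bullet \mapsto s_\$ \mid \1 \mapsto s_\0 \mid \0 \mapsto s_\1}{s_\bot} \]
and $M_0 := \lambda r_1 \ldots r_L.\lambda z_\0 z_\1 z_\star p_1.\lambda s_\0 s_\1 \ldots s_\bot.\,z_\1\,s_\0\,t\,s_\2 \ldots s_\bot$. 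This is the $\eta$-expansion of $z_\1$ with its $s_\1$-argument replaced by $t$, and it is a closed normal term of type $\sigma_\frakR$.

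On the first argument tuple we have $z_\1 := u$ and $p_1 := I$, so $t$ reduces to $s_\1$ and the result is $\lambda s_\0 \ldots s_\bot.\,u\,s_\0 \ldots s_\bot$. This is not $\beta$-equal to $u$, but it denotes the same element as $u$ for every $D$ and every interpretation of $u$. On the remaining four tuples we have $z_\1 := \pi_\1$ and $p_1 := \delta_\bullet, \delta_\bullet, \delta_\1, \delta_\0$. By Lemma~\ref{lem:delta} the results are $\pi_\$, \pi_\$, \pi_\0, \pi_\1$, exactly as required.

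Hence $M_0$, which never uses $r_1,\ldots,r_L,z_\0,z_\star$, realizes your specification for \emph{every} $\frakR$; it even solves the $\beta\eta$-version of $F_\frakR\,\mathsf{X} = N_\frakR$. No choice of $D$ or of a \enquote{non-producible} value for $u$ repairs this, since $\lambda\vec{s}.\,d\,\vec{s}$ and $d$ coincide for every $d$. The proofs of Lemma~\ref{lem:ring2}, Lemma~\ref{lem:ring1} and Theorem~\ref{thm:shape} rely on a $\beta$-normal form being literally the variable $u$, and this has no semantic counterpart.

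The missing idea is that for $\lambda$-definability the shape of the realizer must be enforced by the specified function itself, not by a free variable. This is why the paper does not transfer the $u$-constraint to (4). Instead it points to two things. First, the partial tables of Remark~\ref{rem:G}, which mirror the intersection types of Examples~\ref{xmp:inh_Q} and~\ref{xmp:inh_R} with alphabet symbols as atomic values, so there is no projection encoding to go under. Second, the correspondence of~\cite{SalvatiMGB12} between intersection type inhabitation and $\lambda$-definability, or alternatively a modification of Loader's proof~\cite{loader2001undecidability}. Either way the realizer is essentially a member of $\calR_1$, just like an inhabitant of $T_\frakR$.

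The paper only sketches (3) and (4) via these citations. Still, any proof of (4) has to rest on such a model-level shape restriction. It cannot be obtained by reinterpreting the matching constraints of Lemma~\ref{lem:ssts_to_constr} in a finite model.
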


The presented approach shows $(1) \iff (2)$.
With some effort, $(1) \iff (3)$ can be concluded via the approach showing undecidability of intersection type inhabitation~\cite{Urzyczyn09}.
Similarly, $(1) \iff (4)$ can be obtained via a modification of Loader's proof for undecidability of $\lambda$-definability~\cite{loader2001undecidability}, taking into account the correspondence of Salvati et al.~\cite{SalvatiMGB12}.
However, we make the following two observations regarding an alternative, uniform argument.
First, based on Remark~\ref{rem:R1}, the approach is easily adapted to show $(1) \iff (3)$, such that the inhabitant is essentially a member of $\calR_1$.
This is already done in the existing mechanized reduction from Problem $\0^+ \Rightarrow^* \1^+$ to intersection type inhabitation\;\coqextra{IntersectionTypes/Reductions/SSTS01\_to\_CD\_INH.v}.
Second, based on Remark~\ref{rem:G}, the approach can be adapted to show $(1) \iff (4)$, such that the realizer is essentially a member of $\calR_1$.
This is further supported by the known correspondence between intersection type inhabitation (in the fragment at hand) and $\lambda$-definability~\cite{SalvatiMGB12}.

\section{Conclusion}\label{sec:concl}
The present work presents a new, mechanized proof of the undecidability of higher-order $\beta$-matching.
The mechanization is contributed to the existing Coq Library of Undecidability Proofs~\cite{CLUP20}.

While the existing proofs by Loader~\cite{Loader03} and by Joly~\cite{Joly05} are each based on variants of $\lambda$-definability, the presented proof reduces a rewriting problem (Problem $\0^+ \Rightarrow^* \1^+$) to higher-order $\beta$-matching.
As a result, the proof is simpler to verify in full detail and yields a concise mechanization.
Additionally, undecidability of Problem $\0^+ \Rightarrow^* \1^+$ is already mechanized, and is part of the Coq Library of Undecidability Proofs.

Besides the main technical result, we argue that the present approach is uniformly applicable to show undecidability of intersection type inhabitation and $\lambda$-definability.
The former is already established and implemented as refinement~\cite{DudenhefnerR19} of Urzyczyn's undecidability result~\cite{Urzyczyn09}.
The latter is an application of the known correspondence between intersection type inhabitation and $\lambda$-definability~\cite{SalvatiMGB12}.
A mechanization of this argument remains open.

By Remark~\ref{rem:order}, the present approach agrees with Loader's result that $\beta$-matching is undecidable at order $6$.
While Loader conjectures that order $5$ may suffice, neither Loader's technique (as observed by Joly~\cite[Section~5]{Joly05}), nor the present approach is applicable at order $5$.
Constraining the shape of candidate solutions both in the present work as well as in Loader's proof seems to necessitate order $6$.
While $\beta$-matching at order $4$ is decidable~\cite{Padovani00}, decidability at order $5$ remains an open question.

As pointed out in Remark~\ref{rem:typing-break}, the presented approach might be adapted to scenarios beyond the simply typed $\lambda$-calculus.
An interesting alternative to the simple type system is the Coppo-Dezani intersection type assignment system~\cite{CoppoDezani80}, which characterizes strong normalization~\cite{amadio1998domains}.
Well-typedness in this system would allow for more solution candidates and require more effort with respect to syntactic constraints (cf.\,Section~\ref{sec:undec}).
It is reasonable to believe that higher-order $\beta$-matching is undecidable in any type system for the $\lambda$-calculus which includes the simple type system.

The present mechanization was completed two decades after Loader's proof and is based on two developments.
First, the synthetic approach to computability~\cite{forster2021computability} realized in the undecidability library~\cite{CLUP20} provides a suitable framework for mechanized undecidability results.
The existing infrastructure for the {$\lambda$-calculus} served as an excellent starting point for the development.
Second, proof automation based on \texttt{auto} and \texttt{lia} tactics, and the ability to adapt a mechanization to an evolving construction supported the conception of the present approach both at the intuitive and at the technical level.
While proofs of the individual lemmas (cf.\,Section~\ref{sec:undec}) in the development are quite simplistic, they involve exhaustive case analyses and are sensitive to the exact details of the underlying construction.
Once all cases are covered, there is little room for doubt that the construction is correct.
As a result, the proof was developed via interaction with the proof assistant prior to being transcribed into a traditional written format.

\bibliographystyle{alphaurl}
\bibliography{bibliography}
\end{document}